\newcommand{\val}[1]{{\small \texttt{#1}}}
\newcommand{\yeye}[1]{\textcolor{blue}{Yeye: #1}}
\newcommand{\avhfull}{\textsc{Auto-Validate-by-History}\xspace}
\newcommand{\avh}{\textsc{AVH}\xspace}
\newenvironment{proofsketch}{\paragraph{\textsc{\normalfont{Proof Sketch}}:}}{\hfill$\square$}
\newcounter{definition}
\newenvironment{definition}[1][]{\refstepcounter{definition}\par\smallskip\textsc{Definition~\thedefinition.\ #1}}{\smallskip}
\newcounter{example}
\newenvironment{example}[1][]{\refstepcounter{example}\par\smallskip\textsc{Example~\theexample.\ #1}}{\smallskip}
\newcounter{lemma}
\newcounter{proposition}
\newenvironment{proposition}[1][]{\refstepcounter{proposition}\par\smallskip\textsc{Proposition~\theproposition.\ #1} \itshape}{\smallskip}
\begin{document}

\title{Auto-Validate by-History: Auto-Program Data Quality Constraints to Validate Recurring Data Pipelines}



\author{Dezhan Tu}
\authornote{Work done at Microsoft.}
\affiliation{%
  \institution{University of California, Los Angeles}
}

\author{Yeye He, Weiwei Cui, Song Ge, Haidong Zhang, Han Shi, Dongmei Zhang, Surajit Chaudhuri}
\affiliation{%
  \institution{Microsoft Research}
}

\renewcommand{\authors}{Dezhan Tu, Yeye He, Weiwei Cui, Song Ge, Haidong Zhang, Han Shi, Dongmei Zhang, Surajit Chaudhuri}


\begin{abstract}
Data pipelines are widely employed in modern enterprises to power a variety of Machine-Learning (ML) 
and Business-Intelligence (BI) applications. Crucially, these pipelines are \emph{recurring} (e.g., daily or hourly) in production settings to keep data updated so that ML models can be re-trained regularly, and BI dashboards
refreshed frequently. 
However, data quality (DQ) issues can often creep into recurring pipelines because of upstream schema and data drift over time. As modern enterprises operate thousands of recurring pipelines, today data engineers have to spend substantial efforts to \emph{manually} monitor and resolve DQ issues, as part of their DataOps and MLOps practices.

Given the high human cost of managing large-scale pipeline operations, it is imperative that we can \emph{automate} as much as possible.
In this work, we propose \avhfull (\avh) that can automatically detect DQ issues in recurring pipelines, leveraging rich statistics from historical executions. 
We formalize this as an optimization problem, and develop constant-factor approximation algorithms with provable precision guarantees.
Extensive evaluations using 2000
production data pipelines at Microsoft demonstrate
the effectiveness and efficiency of \avh.

\end{abstract}


\maketitle
\pagestyle{plain}

\section{Introduction}
Data pipelines are the crucial infrastructure underpinning the modern data-driven economy. 
Today, data pipelines are ubiquitous in large technology companies such as Amazon, Google and Microsoft to power data-hungry businesses like search
and advertisement~\cite{schelter2018deequ, breck2019data, song2021auto, zhou2012scope}. Pipelines are also increasingly used in traditional enterprises across a variety of ML/BI applications, in a growing trend to democratize data~\cite{gartner}.

Production data pipelines are often \emph{inter-dependent}, forming complex ``webs'', where input tables used by downstream pipelines frequently depend on output tables from upstream pipelines.

Furthermore, these pipelines are often configured to \emph{recur} on a regular basis (e.g., hourly or daily), to ensure data stay up-to-date for downstream use cases (e.g., fresh data enables ML models to be re-trained regularly, and BI dashboards refreshed continuously).

\textbf{Recurring Pipelines: Prone to Fail due to DQ.} 
The \emph{recurring} and \emph{inter-dependent} nature of production pipelines make them vulnerable to failure due to data quality (DQ) issues,
because over time unexpected DQ issues, such
as data drift~\cite{polyzotis2017data} and schema drift~\cite{schelter2018automating, breck2019data},
can creep in, causing cascading issues in pipelines.

Although DQ issues in data pipelines are widely documented in the literature (especially in industry settings~\cite{schelter2018deequ, breck2019data, song2021auto, zhou2012scope}), we describe a few common types of DQ issues from the literature, in order to make the discussion concrete and self-contained: 
\begin{itemize}[leftmargin=*,noitemsep,topsep=0pt,parsep=0pt,partopsep=0pt]
\item \emph{\underline{Schema drift}}: A newly arrived batch of input data may have a schema change compared to previous input (e.g., missing columns or extra columns), which can result in incorrect behavior in data pipelines~\cite{schelter2018automating, breck2019data}.
\item \emph{\underline{Increasing nulls}}: There is sometimes a sudden increase of null, empty strings, or special values (e.g., -1) in a column due to external factors -- for instance, Google reports a DQ incident where null values in a column increase substantially in a short period of time, because the module that populates data in this column encountered an unusual number of RPC time-outs from a networking outage~\cite{polyzotis2017data}.
\item \emph{\underline{Change of units}}: The unit of measurement for numeric values can change over time, when the logic that populates data evolves -- for instance, Google reports a real DQ issue in their search ranking~\cite{polyzotis2017data}, where the program that populates the ``age'' field of web documents previously used the unit of ``days'' (e.g., a document that is 30 days old will have an ``age'' value of 30), which later got changed to  ``hours'' (making the same document to be have the ``age'' value of 720). This leads to orders of magnitude larger ``age'' values, and incorrect behaviours downstream. 
\item \emph{\underline{Change of value standards}}: Value standards for string-valued data can change over time -- for instance, Amazon reports a DQ issue where a ``\val{language-locale}'' column previously used lowercase values like ``\val{en-gb}'', which later changed into uppercase ``\val{en-GB}'', creating a mixed bag of inconsistent values in the same column, leading to incorrect behaviours in downstream applications~\cite{schelter2018automating}.
\item \emph{\underline{Change of data volume}}: The volume (e.g., row-count) for a new batch of data in a recurring pipeline can change significantly from previous batches, which can also be indicative of DQ issues. 
\end{itemize}
This list of DQ issues is clearly not exhaustive as there are many other types of DQ issues documented in the literature~\cite{schelter2018automating, schelter2018deequ, breck2019data}.

\begin{figure*}[t]
 \vspace{-14mm}
    \centering
    \hspace{-3mm}
    \subfigure[Amazon's Deequ: each arrow points to a column-level constraint ]{\label{fig:deequ}\includegraphics[height=2.7cm]{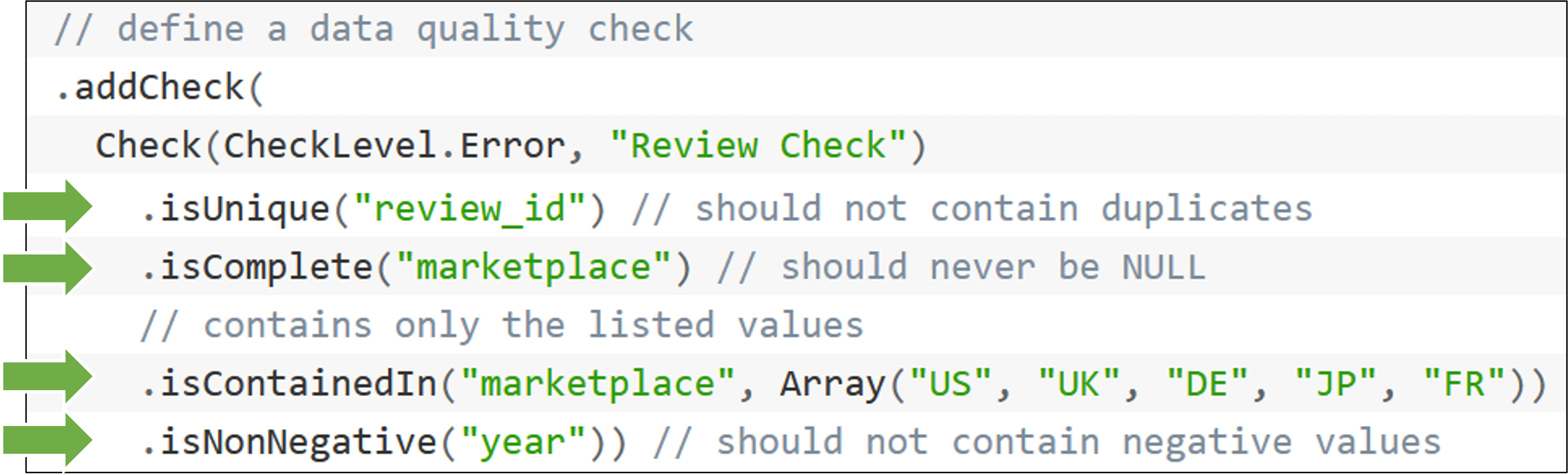}} 
    \hspace{1mm}
    \subfigure[Google's TFDV: each arrow points to a column-level constraint]{\label{fig:tfdv}\includegraphics[height=3cm]{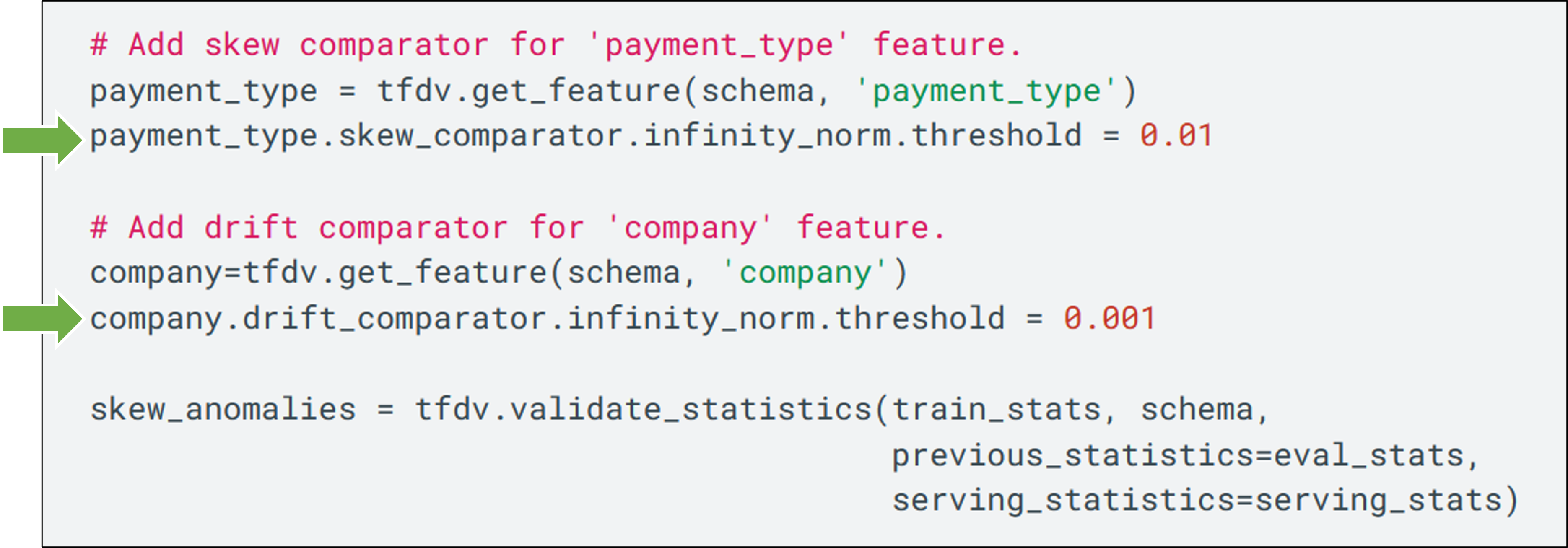}}
    \vspace{-5mm}
    \caption{Declarative Data-validation using manually-programmed constraints}
    \label{fig:DSL}
    \vspace{-5mm}
\end{figure*}

When DQ issues arise in recurring
pipelines, they tend to introduce \emph{silent} failures (i.e., with no explicit exceptions thrown, or error messages generated). The silent nature of DQ issues makes them difficult to catch, but no less damaging. For example, when null values increase significantly, or the unit of measurement changes, downstream ML models will continue to operate but will likely churn out inaccurate predictions (e.g., Google reports a DQ issue in their production pipelines that causes a recommendation model in Google Store to produce sub-optimal results -- fixing this single DQ issue improves their apps install rate by 2\%~\cite{breck2019data}). 

In general,  ``silent'' DQ failures pollute
downstream data products, which makes it more time-consuming for engineers to detect/debug/fix. Silent DQ failures in pipelines are therefore a major pain point in MLOps and DataOps practices~\cite{schelter2018automating, schelter2018deequ, breck2019data, song2021auto}.

\textbf{``Guardrails'' for Pipelines: Data Validation}. Technology companies with large-scale data pipeline operations are among the first to recognize the need for employing data-validation "guardrails" in recurring pipelines to catch DQ issues early as they arise. A number of data-validation tools have been developed,
including Google’s TensorFlow Data Validation 
(TFDV)~\cite{caveness2020tensorflow} and Amazon’s Deequ~\cite{schelter2018deequ, schelter2019unit}. 

These tools develop easy-to-use domain-specific languages (DSLs), so that 
engineers can write declarative DQ 
constraints that describe how “normal” data
should look like in recurring data pipelines, such that
unexpected deviation in the future can be flagged for review.

Figure~\ref{fig:deequ} shows an example code snippet 
from Amazon Deequ. Using the DSL introduced in Deequ, one could
declare the ``\val{review\_id}'' column to be unique, the 
``\val{marketplace}'' column to be complete (with no NULLs), 
etc.. These constraints are then used to validate future data arriving in the recurring pipeline. 

Figure~\ref{fig:tfdv} shows a similar example 
from Google’s TFDV, which specifies that, when a new batch of input data arrives in a pipeline,
the distributional distance of values in the ``\val{payment\_type}'' column should be similar to the same column from previous batches (the code snippet therefore specifies that the L-infinity distance of the two should be no greater 
than 0.01).


\textbf{Automate Data Validation: Leveraging History.}
While these DSL-based declarative data-validation solutions improve upon low-level assertions and can improve DQ in production pipelines as reported in~\cite{schelter2018automating, schelter2018deequ, breck2019data},
they require data engineers to manually write 
data constraints \emph{one-column-at-a-time} like shown in Figure~\ref{fig:DSL} (with a lone exception in~\cite{redyuk2021automating}, which employs off-the-shelf anomaly detection algorithms). This
is clearly time-consuming and hard to scale -- large organizations today operate thousands of pipelines, and hundreds of columns in each table, it is impractical for engineers to manually program DQ for each column.

We emphasize that writing DQ constraints is not just time-consuming, sometimes it is also genuinely difficult for humans to program DQ correctly, because users need to (1) have a deep understanding of the underlying data including how the data may evolve over time; and (2) be well-versed in complex statistical metrics (e.g., L-infinity vs. JS-divergence), before they can program DQ effectively. Consider the example of online user traffic, such data can fluctuate quickly over time (e.g., between different hours of the day and different days of the week), which is hard to anticipate and even harder to program using appropriate metrics and thresholds.

To address this common pain point, in this work
we propose to ``auto-program'' DQ, by leveraging ``history''. Our insight is that rich statistical information from past executions of the same pipeline (e.g., row-counts, unique-values, value-distributions, etc.) is readily available, which can serve as strong signals to reliably predict whether a new batch of data may have DQ issues or not. 

To see why this is the case, consider a simplistic example where all K past executions of a recurring pipeline produce exactly 50 output rows (one row for each of the 50 US states). This row-count becomes a ``statistical invariant'' unique to this particular pipeline, which can serve as a good predictor for DQ issues in the future -- deviations from the invariant in new executions (e.g. an output with only 10 rows or 0 row), would likely point to DQ issues. 

Obviously, simple row-counts are not the best DQ predictor for all pipelines, as some pipeline can have row counts that can vary significantly. In such cases, DQ constraints based on other types of statistical metrics will likely be more effective.

Table~\ref{tab:numeric} and Table~\ref{tab:category} list common statistical metrics used to program DQ constraints (details of these metrics can be found in Appendix~\ref{apx:detailed-statistics}). Tools like TFDV and Deequ already support many of these metrics today, but it is difficult for humans to manually select suitable metrics, and then guess what thresholds would work well. Our proposed \avh aims to automatically program suitable DQ from this large space of statistical primitives, so that the resulting DQ is tailored to the underlying pipeline, without human intervention.

Overall, our proposed \avh is designed to have the following properties that we believe are crucial in recurring pipelines:

\begin{itemize}[leftmargin=*, noitemsep,topsep=0pt,parsep=0pt,partopsep=0pt]
\item \underline{\textit{Automated}}. Instead of requiring humans to manually program DQ constraints column-at-a-time, \avh can auto-program rich DQ leveraging statistics from past executions.
\item \underline{\textit{Highly accurate}}. \avh is specifically designed to achieve high accuracy, as frequent false-alarms would require constant human attention that can erode user confidence quickly. In \avh, we aim for very low False-Positive-Rate or FPR (e.g., 0.01\%), which is also configurable if users so choose. \avh can then auto-program DQ guaranteed not to exceed that FPR, while still maximizing the expected ``recall'' (the number of DQ issues to catch). 
\item \underline{\textit{Robust}}. Unlike traditional ML methods that require a significant amount of training data, we exploit rich statistical properties (Chebyshev, Chantelli, CLT, etc.) of the underlying metrics, so that the predictions are robust even with limited historical data (e.g. only a few days of histories).
\item \underline{\textit{Explainable}}. \avh produces explainable DQ constraints using standard statistical metrics (as opposed to black-box models), which makes it possible for human engineers to understand, review, and approve, if human interactions become necessary. 
\end{itemize}

\noindent \textbf{Contributions.} We make the following contributions in this work.
\begin{itemize}[leftmargin=*, noitemsep,topsep=0pt,parsep=0pt,partopsep=0pt]
\item We propose a novel problem to auto-program pipeline DQ leveraging history, formalized as a principled optimization problem specifically optimizing both precision and recall.
\item We develop algorithms that leverage the different statistical properties of the underlying metrics, to achieve a constant-factor 
approximation, while having provable  precision guarantees.
\item Our extensive evaluation on 2000 real production pipelines suggests that \avh substantially outperforms a variety of commercial solutions, as well as  SOTA methods for anomaly detection from the machine learning and database literature. 
\end{itemize}


\vspace{-2mm}
\section{Related Work}
\textbf{Data validation.} Data validation for pipelines is an emerging topic that has attracted significant interest from the industry, including recent efforts such as Google’s TensorFlow Data Validation 
(TFDV)~\cite{caveness2020tensorflow}, Amazon’s Deequ~\cite{schelter2018deequ, schelter2019unit}, and
LinkedIn's Data Sentinel~\cite{swami2020data}. With the lone exception of~\cite{redyuk2021automating},
most existing work focuses on developing infrastructures and DSLs so that engineers can program DQ constraints in a declarative manner. 

\textbf{Anomaly detection.} Anomaly detection has been widely studied in time-series and tabular settings~\cite{ben2005outlier, aggarwal2001outlier, hodge2004survey, ferdousi2006unsupervised, ljung1993outlier, basu2007automatic}, and is clearly related.
We compare with an extensive set of over 10 SOTA methods from the anomaly detection literature, and show \avh is substantially better in our problem setting of DQ in pipelines, because \avh uniquely exploits the  statistical properties of underlying metrics, whereas standard anomaly detection methods would treat each metric as just another ``feature dimension''.  This enables \avh to have higher accuracy, and excel with even limited data (e.g., 7 days of historical data), as we will show in our experiments. 

\textbf{Data cleaning.} There is a large literature on data cleaning (e.g., surveyed in~\cite{rahm2000data, raman2001potter, galhardas2001declarative, hellerstein2008quantitative, li2021auto}), which we also compare with. Most existing work focuses on the static single-table setting~\cite{yan2020scoded, huang2018auto, wang2019uni, song2021auto, hellerstein2008quantitative, chu2013holistic, raman2001potter, chiang2008discovering}, where data errors need to be detected from one table snapshot. In comparison, we study how multiple historical table snapshots from recurring pipelines can be explicitly leveraged for data quality, which is a setting not traditionally considered in the data cleaning literature.
\vspace{-2mm}
\section{Preliminary: DQ in Pipelines}
\label{sec:preliminary}
In this section, we introduce necessary preliminaries for programming Data Quality (DQ) in the context of data pipelines.

\begin{small}
\begin{table*}[h]
\vspace{-14mm}
    \centering
    \begin{tabular}{cp{0.7\textwidth}}
    \toprule
    Type & Metrics  \\ 
    \midrule
    Two-distribution & Earth Mover's distance (EMD)~\cite{rubner1998metric},  Jensen–Shannon divergence ($JS\_div$)~\cite{endres2003new}, Kullback–Leibler divergence ($KL\_div$)~\cite{endres2003new}, Two-sample Kolmogorov–Smirnov test ($KS\_dist$)~\cite{massey1951kolmogorov}, Cohen's d ($Cohen\_d$)~  \cite{cohen2013statistical} \\
    \hline
    Single-distribution &  $min$, $max$, $mean$, $median$, $sum$, $range$, $row\_count$, $unique\_ratio$, $complete\_ratio$ \\
    \bottomrule 
    \end{tabular}
    \caption{Statistical metrics used to generate DQ constraints for numerical data (details in Appendix~\ref{apx:detailed-statistics}).}
    \label{tab:numeric}
\end{table*}
\end{small}

\begin{small}
\begin{table*}[!h]
\vspace{-8mm}
    \centering
    \begin{tabular}{cp{0.7\textwidth}}
    \toprule
     Type & Metrics  \\ 
    \midrule
    Two-distribution & L-1  distance\cite{cantrell2000modern}, L-infinity   distance\cite{cantrell2000modern}, Cosine distance\cite{gupta2021deep}, Chi-squared test\cite{fienberg1979use}, Jensen–Shannon divergence ($JS\_div$)~\cite{endres2003new}, Kullback–Leibler divergence ($KL\_div$)~\cite{endres2003new} \\
    \hline
    Single-distribution  & $str\_len$, $char\_len$, $digit\_len$, $punc\_len$, $row\_count$, $unique\_ratio$, $complete\_ratio$, $dist\_val\_count$\\
    \bottomrule 
    \end{tabular}
    \caption{Statistical metrics used to generate DQ constraints for categorical data (details in Appendix~\ref{apx:detailed-statistics}).}
    \vspace{-8mm}
    \label{tab:category}
\end{table*}
\end{small}

As we discussed, data pipelines are ubiquitous today, yet DQ issues are common in recurring pipelines, giving rise to data-validation tools such as Google’s TFDV and  Amazon’s Deequ. 
At its core, these methods validate DQ by checking input/output tables of recurring pipelines, against pre-specified DQ constraints. Most of these constraints are defined over a single column $C$ at a time (Figure~\ref{fig:DSL}). Single-column DQ is also the type of DQ we focus in this work.

While TFDV and Deequ use syntactically different DSLs to program DQ constraints, the two are very similar in essence, as both can be described as constraints based on statistical metrics.

\textbf{DQ constraints by statistical metrics.}
Most DQ primitives used for pipeline validation can be expressed as statistics-based constraints. 
Table~\ref{tab:numeric} and Table~\ref{tab:category} list common statistical metrics used in DQ (e.g., row-count, L-infinity, etc.). We denote this space of possible metrics by $\mathbf{M}$. This is obviously a large space that requires time and expertise from users to navigate and select appropriately.

We now define two types of DQ constraints using metric $M \in \mathbf{M}$, which we call \emph{single-distribution} and \emph{two-distributions DQ constraints}, respectively.

\begin{definition}
\label{def:single}
A \emph{single-distribution DQ constraint}, denoted by a quadruple $Q(M, C, \theta_l, \theta_u)$, is defined using a statistical metric $M \in \mathbf{M}$, over a target column of data $C$, with lower-bound threshold $\theta_l$ and upper-bound $\theta_u$. The constraint $Q(M, C, \theta_l, \theta_u)$ specifies that $C$ has to satisfy the inequality $\theta_l \leq M(C) \leq \theta_u$, or the metric $M(C)$ is expected to be between the range $[\theta_l, \theta_u]$ (otherwise the constraint Q is deemed as violated).
\end{definition}

Single-distribution DQ can be instantiated using example metrics shown in Table~\ref{tab:numeric}
and Table~\ref{tab:category}. Such DQ constraints rely on a single data distribution of a column $C$, and can be validated using a newly-arrived batch data alone. We illustrate this using an example below.

\begin{example}
In the example Deequ snippet shown in Figure~\ref{fig:deequ}, the ``\val{review\_id}'' column is required to be unique, which can be expressed as a single-distribution DQ using the $unique\_ratio$ metric from Table~\ref{tab:category}, as $Q_1(unique\_ratio,$ $review\_id,$ $1, 1)$, equivalent to 
$Q_1: 1 \leq$ unique\_ratio(\val{review\_id}) $\leq 1$ 
(where the upper-bound $\theta_u$ and lower-bound $\theta_l$ converge to the same value 1). 
If on the other hand, uniqueness is required to be high, say at least 95\% (but not 100\%), we can write as $Q_2(unique\_ratio,$ $review\_id,$ $0.95, 1)$, or $Q_2: 0.95 \leq$ unique\_ratio(\val{review\_id}) $\leq 1$.
Other examples in Figure~\ref{fig:deequ} can be written as single-distribution DQ similarly.
\end{example}

Next, we introduce \emph{two-distribution DQ constraints} that require comparisons between two distributions of a column. 

\begin{definition}
\label{def:two}
A \emph{two-distribution DQ constraint}, denoted as $Q(M,$ $C, C', \theta_l, \theta_u)$, is defined using a statistical metric $M \in \mathbf{M}$, that compares one batch ``target'' data in a column $C$, and a batch of ``baseline'' data $C'$,  using lower-bound threshold $\theta_l$ and upper-bound threshold $\theta_u$. Formally we write $Q(M, C, C', \theta_l, \theta_u) = $ $\theta_l \leq M(C, C') \leq \theta_u$, which states that the metric $M(C, C')$ comparing $C$ and $C'$ is expected to be in the range $[\theta_l, \theta_u]$  (or Q is as violated otherwise).
\end{definition}

Two-distribution DQ compares a target column against a baseline column, which can be the same column from two consecutive executions of the same pipeline, or two batches of training/testing data, etc. We illustrate this in the example below.

\begin{example}
\label{ex:tfdv}
In the example TFDV snippet shown in Figure~\ref{fig:tfdv}, the first constraint specifies that for the ``\val{payment\_type}'' column, we expect two batches of the same data to differ by at most 0.01 using the L-infinity metric. This can be written as $Q_3: 0 \leq L_{inf}(C, C') \leq 0.01$.
The second constraint in Figure~\ref{fig:tfdv} defined on the ``\val{company}'' column, can be specified using two-distribution DQ similarly.
\end{example}

\textbf{Conjunctive DQ program.}
Given a target column $C$, it is often necessary to validate $C$ using multiple orthogonal metrics in $\mathbf{M}$ (e.g., both row-counts and distribution-similarity need to be checked, among other things). In this work, we consider conjunctions of multiple DQ constraints, which we call a \emph{conjunctive DQ program}.  Note that the use of conjunction is intuitive, as we want all DQ to hold at the same time
(prior work in TFDV and Deequ also implicitly employ conjunctions, as the example in Figure~\ref{fig:DSL} shows).

\begin{definition}
A \emph{conjunctive DQ program}, defined over a given set of (single-distribution or two-distribution) DQ constraints $\mathbf{S}$, denoted by $P(\mathbf{S})$, is defined as the conjunction of all $Q_i \in \mathbf{S}$, written as $P(\mathbf{S}) = \bigwedge_{Q_i \in \mathbf{S}} Q_i$.
\end{definition}

\begin{example}
Continue with Example~\ref{ex:tfdv}, let $C$ denotes the target column ``\val{payment\_type}''. In addition to the aforementioned constraint $Q_3: 0 \leq L_{inf}(C, C') \leq 0.01$, one may additionally require that this column to be at least 95\% complete (with less than 5\% of nulls), written as $Q_4: 0.95 \leq complete\_ratio(C) \leq 1$.  Furthermore, we expect to see no more than 6 distinct values (with ``\val{cash}'', ``\val{credit}'', etc.) in this column, so we have $Q_5: 0 \leq distinct\_cnt(C) \leq 6$.

Putting these together and let $\mathbf{S} = \{Q_3, Q_4, Q_5\}$, we can write a conjunctive program $P(\mathbf{S}) = \bigwedge_{Q_i \in \mathbf{S}} Q_i$ (or $Q_3 \land Q_4 \land Q_5$).
\end{example}

\section{\avhfull}
\label{sec:algo}

While DQ programs are flexible and powerful, they are difficult to write manually. We now describe our \avh to auto-program  DQ.

\subsection{Problem Statement}

For the scope of this work, we consider auto-generating conjunctive DQ programs for each column $C$ in data pipelines (or only for a subset of important columns selected by users), using column-level single-distribution or two-distribution DQ constraints (Section~\ref{sec:preliminary}). 

For a given column $C$, our goal is to program suitable DQ by selecting from a large space of metrics $\mathbf{M}$ in  Table~\ref{tab:numeric} and  Table~\ref{tab:category}. This space of $\mathbf{M}$ is clearly large and hard to program manually. Also note that while we list commonly-used metrics, the list is not meant to be exhaustive. In fact, \avh is designed to be \emph{extensible}, so that new metrics (e.g., statistical distances relevant to other use cases) can be added into $\mathbf{M}$ in a way that is transparent to users.


For a given $C$ and a set of possible $\mathbf{M}$, this induces a large space of possible DQ constraints on $C$. We denote this space of possible single-distribution and two-distribution DQ as $\mathbf{Q}$, defined as:
\begin{align} 
\begin{split}
\label{eqn:dq_space} 
\mathbf{Q} & =  \{Q(M, C, \theta_l, \theta_u) | M \in \mathbf{M}, \theta_l \in \mathbb{R}, \theta_u \in \mathbb{R}, \theta_l \leq \theta_u \} \\ 
& \cup  \{Q(M, C, C', \theta_l, \theta_u) | M \in \mathbf{M}, \theta_l \in \mathbb{R}, \theta_u \in \mathbb{R}, \theta_l \leq \theta_u \} 
\end{split}
\end{align}



We note that in production settings, it is crucial that auto-generated DQ programs are of high precision, with very few false-alarms (false-positive detection of DQ issues). This is because with thousands of recurring pipelines, even a low \emph{False-Positive Rate (FPR)} can translate into a large number of false-positives, which is undesirable as they usually  require human intervention. Because it is critical to ensure high precision, in \avh we explicitly aim for a very low level of FPR, which we denote by $\delta$, e.g., $\delta = 0.1\%$. 

Finally, because we are dealing with data from recurring data pipelines, we assume that the same data from $K$ past executions of this pipeline is available, which we denote as $H = \{C_1, C_2, \ldots, C_K\}$. These  $K$ previous batches of data are assumed to be free of DQ issues, which is reasonable because engineers usually manually check the first few pipeline runs after a pipeline is developed to ensure it runs properly. DQ issues tend to creep in over time due to data drift and schema drift~\cite{breck2019data, schelter2018automating}.

\textbf{\avhfull (\avh).}
Given these considerations, we now formally define our problem  as follows. 

\begin{definition} \avhfull.
Given a target column $C$ from a pipeline, and the same data from previous $K$ executions $H = \{C_1, C_2, \ldots, C_K\}$, a space of possible DQ constraints $\mathbf{Q}$, and a target false-positive-rate (FPR) $\delta$. Construct a conjunctive DQ program $P(\mathbf{S})$ with $\mathbf{S} \subseteq \mathbf{Q}$, such that the expected FPR of $P(\mathbf{S})$ is no greater than $\delta$, while $P(\mathbf{S})$ can catch as many DQ issues as possible. 
\end{definition}
We write \avh as the following optimization problem:
\begin{align}
\hspace{-1cm} \text{(\avh)} \qquad{} \max &~~ \text{R}(P(\mathbf{S})) \label{eqn:recall} \\  
\mbox{s.t.} ~~ & \text{FPR}(P(\mathbf{S})) \leq \delta \label{eqn:fpr}\\
  & P(\mathbf{S}) = \bigwedge_{Q_i \in  \mathbf{S},~~ \mathbf{S} \subseteq \mathbf{Q}} Q_i \label{eqn:conjunctive} 
\end{align}

Where $\text{R}(P(\mathbf{S}))$ denotes the expected recall of a DQ program $P(\mathbf{S})$ that we want to maximize, and $\text{FPR}(P(\mathbf{S}))$ denotes its expected FPR, which is required to be lower than a target threshold $\delta$. 


\subsection{Construct DQ constraints}
\label{sec:dq_constraint}

To solve \avh, in this section
we will first describe how to construct a large space of DQ constrains $\mathbf{Q}$ (and estimate their FPR), which are pre-requisites before we can use $\mathbf{Q}$ to generate conjunctive programs for \avh (in Section~\ref{sec:avh}).


Recall that to instantiate constraints like $Q_i(M, C, \theta_l, \theta_u)$ (from Definition~\ref{def:single}), we need to pick a metric $M \in \mathbf{M}$, apply $M$ on the given column $C$ to compute $M(C)$, and constrain $M(C)$ using suitable upper/lower-bounds thresholds $\theta_u$/$\theta_l$.

Here we leverage the fact that a history  $H = \{C_1, C_2, \ldots, C_K\}$ of the same column $C$ from past executions is available. If we apply $M$ on $H$, we obtain $M(H) = \{M(C_1), M(C_2), \ldots, M(C_K)\}$, which forms a statistical distribution\footnote{Later, we will discuss exceptions to the assumption (e.g., non-stationary time-series).}. When we apply the same metric $M$ on a newly arrived batch of data $C$, the resulting value $M(C)$ can then be seen as a data point drawn from the distribution $M(H)$. Let the estimated mean and variance of $M(H)$ be $\mu$ and $\sigma^2$, respectively. We can construct a DQ constraint $Q(M, C, \theta_l, \theta_u)$, with the following probabilistic FRP guarantees.

\begin{proposition}
\label{prop:chebyshev}
For any metric $M \in \mathbf{M}$, and $\beta \in \mathbb{R}^+$, we can construct a DQ constraint $Q(M, C, \theta_l, \theta_u)$, with $\theta_l = \mu - \beta,$  $\theta_u = \mu + \beta$. The expected FPR of the constructed $Q$ on data without DQ issues, denoted by $\mathrm{E}[FPR(Q)]$, satisfy the following inequality:
\begin{equation}
\label{eqn:chebyshev}
\mathrm{E}[FPR(Q)] \leq (\frac{\sigma}{\beta})^2
\end{equation}
\end{proposition}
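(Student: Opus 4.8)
The plan is to recognize that the expected false-positive rate of the constructed constraint is exactly a two-sided tail probability of the metric value $M(C)$, and then control that tail with Chebyshev's inequality — which is distribution-free and hence applicable to an arbitrary metric $M \in \mathbf{M}$.

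First I would make the probabilistic model explicit. Under the stated assumption that the new batch $C$ is free of DQ issues and is produced by the same recurring pipeline that generated $C_1,\dots,C_K$, the scalar $M(C)$ is a realization of a random variable — call it $X$ — whose distribution is the one summarized by $M(H) = \{M(C_1),\dots,M(C_K)\}$, so that $\mathrm{E}[X]=\mu$ and $\mathrm{Var}(X)=\sigma^2$. By the construction $\theta_l=\mu-\beta$, $\theta_u=\mu+\beta$, the constraint $Q(M,C,\theta_l,\theta_u)$ is violated precisely on the event $\{M(C)<\mu-\beta\}\cup\{M(C)>\mu+\beta\} = \{|X-\mu|>\beta\}$. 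Therefore $\mathrm{E}[FPR(Q)] = \Pr[\,|X-\mu|>\beta\,]$.

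Second I would apply Chebyshev's inequality to $X$: since $X$ has finite variance $\sigma^2$, for every $\beta\in\mathbb{R}^+$ we have $\Pr[\,|X-\mu|\ge\beta\,]\le \sigma^2/\beta^2$, and since $\{|X-\mu|>\beta\}\subseteq\{|X-\mu|\ge\beta\}$ the same bound holds for the strict event. Chaining this with the identity from the previous step yields $\mathrm{E}[FPR(Q)]\le \sigma^2/\beta^2 = (\sigma/\beta)^2$, which is exactly \eqref{eqn:chebyshev}.

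I do not expect a genuine mathematical obstacle: the inequality is a one-line consequence of Chebyshev, and its distribution-free nature is precisely what lets it cover every metric in $\mathbf{M}$ uniformly. The points that need care are modeling ones rather than analytic ones — (i) justifying that $M(C)$ may be treated as a fresh draw from the historical distribution $M(H)$ (the stationarity premise the paper flags as having exceptions, e.g.\ for non-stationary time series), and (ii) the fact that in practice $\mu$ and $\sigma^2$ are the \emph{empirical} mean and variance over the $K$ historical values rather than the true moments, so strictly speaking the bound is on the plug-in estimate of the FPR; if a guarantee on the true FPR were desired one could layer on a finite-sample correction, but that is orthogonal to the statement as written.
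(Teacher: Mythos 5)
Your proof is correct and follows essentially the same route as the paper's: identify the violation event as the two-sided tail $\{|M(C)-\mu|>\beta\}$, then apply Chebyshev's inequality with $k=\beta/\sigma$ to obtain the bound $(\sigma/\beta)^2$. The modeling caveats you raise (treating $M(C)$ as a draw from the historical distribution, and $\mu,\sigma^2$ being plug-in estimates) are acknowledged by the paper outside the proof but handled no more rigorously there than in your write-up.
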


\begin{proof}
We prove this proposition using Chebyshev's inequality~\cite{bulmer1979principles}. Chebyshev states that for a random variable $X$, $P(|X - \mu| \geq k \sigma) \leq \frac{1}{k^2}$, $\forall k \in \mathbb{R}^+$. For the random variable $M(C)$, let $k = \frac{\beta}{\sigma}$. Replacing $k$ with $\frac{\beta}{\sigma}$ above, we get $P(|M(C) - \mu| \geq \beta) \leq (\frac{\sigma}{\beta})^2$. Note that this implies  $P(|M(C) - \mu| \leq \beta) \geq 1 - (\frac{\sigma}{\beta})^2$, which can be rewritten as $P(-\beta \leq (M(C) - \mu) \leq \beta) \geq 1 - (\frac{\sigma}{\beta})^2$, or $P(\mu -\beta \leq M(C) \leq \mu + \beta) \geq 1 - (\frac{\sigma}{\beta})^2$. 

Observe that $\mu -\beta \leq M(C) \leq \mu + \beta$ is exactly our $Q(C, M, \theta_l, \theta_u)$, where $\theta_l = \mu - \beta$,  $\theta_u = \mu + \beta$. We thus get $P(Q \text{ holds on } C)$ $\geq$ $1 - (\frac{\sigma}{\beta})^2$, which is equivalent to saying that the expected FPR of $Q$ is no greater than $(\frac{\sigma}{\beta})^2$, or 
 $\mathrm{E}[FPR(Q)] \leq (\frac{\sigma}{\beta})^2$.
 \end{proof}

We use the following example to illustrate such a constructed DQ constraint and its estimated FPR.

\begin{example}
\label{ex:chebyshev}
Consider a metric $M = complete\_ratio$ from Table~\ref{tab:numeric} that computes the fraction of values in a column $C$ that are ``complete'' (not-null), and a history of data from past executions $H = \{C_1, C_2, \ldots, C_K\}$. Applying $M$ on $H$, we obtain the $complete\_ratio$ on historical data as $M(H) = \{0.92, 0.90, \ldots, 0.91\}$.

From the sample $M(H)$, we estimate its mean $\mu = 0.9$ and variance of $\sigma^2 = 0.0001$, respectively. Using Proposition~\ref{prop:chebyshev}, suppose we set $\beta = 0.05$, we get a $Q_6(C, complete\_ratio, 0.85, 0.95)$ (or equivalently  $0.85 \leq complete\_ratio(C) \leq 0.95$), whose expected FPR has the following inequality: 
 $\mathrm{E}[FPR(Q_6)] \leq (\frac{0.01}{0.05})^2 = 0.04$.
 
Note that using different $\beta$ allows us to instantiate different constraints with different levels of FPR. For example, setting $\beta = 0.1$ will induce a different $Q_7( complete\_ratio, C, 0.8, 1)$ (or $0.8 \leq complete\_ratio(C) \leq 1$), whose expected FPR is 
 $\mathrm{E}[FPR(Q_7)] \leq (\frac{0.01}{0.1})^2 = 0.01$. Note that this yields a lower FPR than that of $Q_6$ above, because $Q_7$ has a wider upper/lower-bound for $complete\_ratio$.
\end{example}

Using Proposition~\ref{prop:chebyshev} and different $\beta$ values, we can instantiate an array of DQ constraints using the same $M$ but different $[\theta_l, \theta_u]$ (thus different FPR guarantees). A DQ with a larger $\beta$ allows a larger range of $M(C)$ values, which is less sensitive/effective in catching DQ issues, but is also ``safer'' with lower expected FPR.

\textbf{Tighter bounds of FPR leveraging metric properties.}
The results in Proposition~\ref{prop:chebyshev} apply to any metric $M \in \mathbf{M}$, and the corresponding bounds on FRP are loose as a result. We derive two tighter FRP bounds for specific types of statistical metrics below, by exploiting unique characteristics of these metrics.

\begin{proposition}
\label{prop:cantelli}
For any metric $M \in \{EMD, JS\_div, KL\_div,$ $KS\_dist, Cohen\_d, L_1, L_{inf}, Cosine,$ $Chi\_squared\}$, and any $\beta \in \mathbb{R}^+$, we can construct a DQ constraint $Q(M, C, \theta_l, \theta_u)$, with $\theta_l = 0,$  $\theta_u = \mu + \beta$. 
 The expected FPR of the constructed $Q$ on data without DQ issues, denoted by $\mathrm{E}[FPR(Q)]$, satisfy the following inequality:
\begin{equation}
\label{eqn:cantelli}
\mathrm{E}[FPR(Q)] \leq \frac{\sigma^2}{\beta^2 + \sigma^2}
\end{equation}
\end{proposition}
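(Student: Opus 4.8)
The plan is to reuse the skeleton of the proof of Proposition~\ref{prop:chebyshev}, but to replace the two-sided Chebyshev bound by the one-sided \emph{Cantelli} (Chebyshev--Cantelli) inequality. The leverage point is that every metric $M$ in the listed set --- $EMD$, $JS\_div$, $KL\_div$, $KS\_dist$, $Cohen\_d$, $L_1$, $L_{inf}$, $Cosine$, $Chi\_squared$ --- is a statistical distance or divergence (or an effect size taken in its non-negative convention), so $M(C,C') \ge 0$ for every target/baseline pair $(C,C')$.

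First I would use this non-negativity to simplify the false-positive event. For the constructed constraint $Q = Q(M, C, C', \theta_l, \theta_u)$ with $\theta_l = 0$ and $\theta_u = \mu + \beta$, the lower-bound inequality $0 \le M(C,C')$ holds \emph{unconditionally}, on clean data and on data with DQ issues alike; hence $Q$ can only be violated on clean data via the upper bound. Therefore $FPR(Q) = P\big(M(C,C') > \mu + \beta\big)$, a purely one-sided tail, whereas in Proposition~\ref{prop:chebyshev} one had to bound the two-sided event $|M(C) - \mu| > \beta$.

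Next I would invoke Cantelli's inequality: for a random variable $X$ with mean $\mu$ and variance $\sigma^2$, and any $\lambda > 0$, $P(X - \mu \ge \lambda) \le \frac{\sigma^2}{\sigma^2 + \lambda^2}$. As in Section~\ref{sec:dq_constraint}, we regard $M(C,C')$ on a freshly arrived batch as a draw from the distribution $M(H)$ whose (estimated) mean and variance are $\mu$ and $\sigma^2$. Applying Cantelli with $X = M(C,C')$ and $\lambda = \beta$ gives $P(M(C,C') \ge \mu+\beta) \le \frac{\sigma^2}{\beta^2 + \sigma^2}$, and combining with the previous step, $\mathrm{E}[FPR(Q)] = P(M(C,C') > \mu+\beta) \le P(M(C,C') \ge \mu+\beta) \le \frac{\sigma^2}{\beta^2+\sigma^2}$, which is exactly~(\ref{eqn:cantelli}). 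Since $\frac{\sigma^2}{\beta^2+\sigma^2} < \frac{\sigma^2}{\beta^2} = (\sigma/\beta)^2$ whenever $\sigma,\beta>0$, this bound is strictly tighter than the one in Proposition~\ref{prop:chebyshev}, as advertised.

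There is no substantive obstacle; the proof is short. The only points requiring a little care are: (i) asserting $M(C,C') \ge 0$ uniformly across the whole family --- immediate for $f$-divergences and for $L_p$/cosine/chi-squared distances, and true for $Cohen\_d$ under the usual non-negative (absolute) effect-size convention; (ii) keeping the strict-versus-nonstrict inequalities aligned so the violation event sits inside the event to which Cantelli applies (handled by the trivial step $P(X > t) \le P(X \ge t)$); and (iii) noting that it is precisely the asymmetric, one-sided window $\theta_l = 0$ that unlocks Cantelli --- a symmetric window $[\mu-\beta,\mu+\beta]$ would force us back to the two-sided bound. One could add a remark that the same one-sided improvement applies to \emph{any} non-negative metric (e.g.\ $row\_count$, $unique\_ratio$) when only upward drift needs to be detected.
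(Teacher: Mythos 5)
Your proof is correct and follows essentially the same route as the paper's: both observe that for these distance-like metrics the constraint is effectively one-sided (the lower bound $\theta_l=0$ cannot generate false positives), reduce the FPR to the upper-tail event $P(M(\cdot)-\mu\geq\beta)$, and apply Cantelli's inequality with $k=\beta/\sigma$ to obtain $\sigma^2/(\beta^2+\sigma^2)$. Your added care about strict versus non-strict inequalities and the uniform non-negativity of the metric family is a minor refinement of the same argument, not a different approach.
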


This bound is derived using Cantelli's inequality~\cite{bulmer1979principles}, a proof of which can be found 
\iftoggle{fullversion}
{
    in Appendix~\ref{appendix:proof_cantelli}.
}
{
    in a full version of this paper~\cite{full}.
}

\begin{proposition}
\label{prop:clt}
For any metric $M \in \{count, mean,$ $str\_len,$ $char\_len, digit\_len, punc\_len, complete\_ratio\}$, and any $\beta \in \mathbb{R}^+$, we can construct a DQ constraint $Q(M, C, \theta_l, \theta_u)$, with $\theta_l = \mu - \beta,$  $\theta_u = \mu + \beta$. 
 The expected FPR of the constructed $Q$ on data without DQ issues, denoted by $\mathrm{E}[FPR(Q)]$, satisfy the following inequality:
\begin{equation}
\label{eqn:clt}
\mathrm{E}[FPR(Q)] \leq 1 - \frac{2}{\sqrt{\pi}}\int_{0}^{\frac{\beta}{\sqrt{2}\sigma}} e^{-t^2} dt
\end{equation}
\end{proposition}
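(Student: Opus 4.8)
The plan is to invoke the Central Limit Theorem (CLT), exploiting the fact that every metric in the stated list is, up to a deterministic scaling, a sample average taken over the rows of the column $C$. Concretely, $mean(C)$ is the arithmetic mean of the $n = |C|$ numeric cell values; $str\_len(C)$, $char\_len(C)$, $digit\_len(C)$, $punc\_len(C)$ are arithmetic means of the corresponding per-cell length statistics; $complete\_ratio(C)$ is the mean of the $n$ indicator variables that a cell is non-null; and $count(C)=n$ is a sum (equivalently $n$ times a mean) of many near-independent bounded per-record arrival indicators. In each case we may write $M(C) = \frac{1}{n}\sum_{i=1}^{n} X_i$ (or $n$ times such an average for $count$), where the $X_i$ are i.i.d.-like, bounded, and of finite variance; the CLT then gives that, for the large $n$ typical of production tables, $M(C)$ is approximately $\mathcal{N}(\mu,\sigma^2)$ with $\mu = \mathrm{E}[M(C)]$ and $\sigma^2 = \mathrm{Var}[M(C)]$, which are exactly the mean and variance estimated from the history $M(H)$.

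Next I would compute the tail probability of the constructed constraint. The constraint $Q(M,C,\theta_l,\theta_u)$ with $\theta_l = \mu - \beta$ and $\theta_u = \mu + \beta$ is violated on clean data precisely when $|M(C) - \mu| > \beta$. Standardizing via $Z = (M(C)-\mu)/\sigma \sim \mathcal{N}(0,1)$, this event becomes $|Z| > \beta/\sigma$, with probability $2\bigl(1 - \Phi(\beta/\sigma)\bigr)$, where $\Phi$ is the standard normal CDF. Using $\Phi(z) = \frac12\bigl(1 + \mathrm{erf}(z/\sqrt 2)\bigr)$ and $\mathrm{erf}(x) = \frac{2}{\sqrt\pi}\int_0^x e^{-t^2}\,dt$, this equals $1 - \mathrm{erf}\!\bigl(\tfrac{\beta}{\sqrt 2\,\sigma}\bigr) = 1 - \frac{2}{\sqrt\pi}\int_0^{\beta/(\sqrt 2\,\sigma)} e^{-t^2}\,dt$. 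Since $\mathrm{E}[FPR(Q)]$ is by definition the probability that $Q$ is violated on data without DQ issues, this yields the claimed bound, with equality in the Gaussian limit and ``$\leq$'' absorbing the CLT approximation.

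The main obstacle is rigor: the CLT is an asymptotic statement, so the inequality is not literally exact at finite $n$. I would address this by (i) noting that the per-row quantities $X_i$ are bounded (lengths, indicators, and in practice bounded value ranges), so a Berry--Esseen estimate gives a uniform $O(1/\sqrt n)$ correction to the Gaussian CDF that is negligible for large pipeline tables; and (ii) remarking that when $n$ is small, or when rows are strongly correlated, one should fall back to the distribution-free bound of Proposition~\ref{prop:chebyshev} or the one-sided bound of Proposition~\ref{prop:cantelli} --- which is exactly why all three bounds are kept available. A secondary subtlety, justifying the independence and identical distribution of the per-row terms, I would handle by appealing to the standard modeling assumption that the rows of a batch are exchangeable draws from a common stationary population, the same assumption already used implicitly in treating $M(H)$ as an i.i.d. sample.
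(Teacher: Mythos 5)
Your proof is correct and follows essentially the same route as the paper's: both view each metric in the list as a (scaled) sample average of per-cell quantities, invoke the CLT to treat $M(C)$ as approximately Gaussian, and read off the two-sided tail probability $1 - \mathrm{erf}\bigl(\tfrac{\beta}{\sqrt{2}\sigma}\bigr)$ as the bound on $\mathrm{E}[FPR(Q)]$. Your added remarks on Berry--Esseen corrections and falling back to the Chebyshev/Cantelli bounds for small or correlated samples are a welcome elaboration on rigor, but do not change the argument.
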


This bound is derived using Central Limit Theorem~\cite{bulmer1979principles}.
\iftoggle{fullversion}
{
    We show a proof of this in Appendix~\ref{appendix:proof_clt}.
}
{
    (A proof of which can be found in~\cite{full}).
}

We omit examples for Proposition~\ref{prop:cantelli} and Proposition~\ref{prop:clt}, but DQ can be constructed similar to Example~\ref{ex:chebyshev} with tighter bounds.

We note that these tighter bounds allow us to construct DQ constraints with better FPR guarantees, which help to meet the constraint in Equation~\eqref{eqn:fpr} of \avh more effectively.
The pseudo-code of this step can be found in Appendix~\ref{apx:pseudo-code}.









\textbf{Time-series Differencing for Non-stationary Data.}
Our analysis so far assumes  $M(H)$ to be a well-behaved  distribution, generated from \emph{stationary processes}~\cite{park2018fundamentals}, defined as processes with probability distributions that are \emph{static} and do not change over time. While this is true for many real cases (e.g., Example~\ref{ex:chebyshev}), there are cases where  $M(H)$ follows \emph{non-stationary processes}~\cite{park2018fundamentals}, in which parameters of the underlying probability change over time.
\begin{example}
\label{ex:nonstationary}
Consider a recurring pipeline that processes one-day's worth of user traffic data visiting a website. Because overall, the user traffic will grow over time, the volume of data processed by the pipeline will increase slightly every day. So for the metric $M=row\_count$, we get a sequence of row-counts for the past $K$ days as $M(H) = \{ 100K, 103K, 105K, 106K, \ldots, 151K, 152K\}$. Note that $M(H)$ is \emph{non-stationary} here, because the parameters of the underlying distribution (e.g., the mean of $M(C)$) change over time.
\end{example}

Modeling non-stationary $M(H)$ like above as stationary using a static distribution is clearly sub-optimal, which may lead to false-positives and false-negatives in DQ applications.

To account for non-stationary $M(H)$, we first determine whether a $M(H)$ is already stationary, using the Augmented Dickey–Fuller (ADF) test from the time-series literature~\cite{cheung1995lag}. If we reject the null-hypothesis in ADF that $M(H)$ is already stationary (e.g., Example~\ref{ex:chebyshev}), we proceed to construct DQ constraints as before. For cases where $M(H)$ is not stationary (e.g., Example~\ref{ex:nonstationary}), we repeatedly apply a technique known as  time-series differencing~\cite{granger1980introduction} on  $M(H)$ until it reaches stationarity. 
\iftoggle{fullversion}
{
    we illustrate this using a small example below, and defer details of the time-series differencing step to Appendix~\ref{appendix:differencing}.
}
{
    While we defer details of the time-series differencing step to a full version of the paper~\cite{full} due to space limit, we illustrate this using a small example below.
}

\begin{example}
\label{ex:differencing}
Continue with Example~\ref{ex:nonstationary}, where $M(H) = \{ 100K,$ $103K, 105K, 106K, \ldots, 151K, 153K\}$, and the metric $M=row\_count$. The Augmented Dickey–Fuller (ADF) test will fail to reject the null hypothesis that $M(H)$ is non-stationary. Applying a first-order time-differencing step~(\cite{park2018fundamentals}) with $t=1$ will produce: $M'_{t=1}(H) = \{ M(C_{2}) - M(C_1), M(C_{3}) - M(C_2), \ldots$ $M(C_{K}) - M(C_{K-1}),\}$ $=$ $\{ 3K, 2K,$ $1K,$ $\ldots, 2K\}$. This resulting $M'_{t=1}(H)$ passes the ADF test and is
then used as a static distribution to generate $\mathbf{Q}$.
\end{example}

\iftoggle{fullversion}
{
    We note that
}
{
    We defer details of this step to~\cite{full}, but we note that
}
the differencing step also allows us to handle cyclic time-series $M(H)$  (e.g., weekly or hourly periodic patterns), by transforming $M(H)$ using first-order differencing with lags~\cite{granger1980introduction}, which can then be handled like stationary processes as before.

\vspace{-2mm}
\subsection{Construct DQ Programs in \avh} \hfill\\
\label{sec:avh}

After we construct constraints $\mathbf{Q}$ and estimate their FPR bounds, 
we are ready solve \avh. 
Recall that in \avh, in addition to satisfying the hard constraint on FPR (Equation~\eqref{eqn:fpr}), our objective (Equation~\eqref{eqn:recall}) is to maximize the expected ``recall'' of the constructed DQ program  (the number of possible DQ issues to catch).
In order to fully instantiate \avh, we still need to estimate the expected recall benefit of each DQ constraint $Q_i \in \mathbf{Q}$, which can guide us to select the most ``beneficial'' DQ program.

\textbf{Estimate DQ recall using synthetic ``training''.} 
Clearly, we cannot foresee the exact DQ issues that may arise in the future in a particular pipeline, to precisely quantify the benefit of each $Q_i \in \mathbf{Q}$. However, 
there is a large literature that documents common types of DQ issues in pipelines (e.g.,~\cite{breck2019data, schelter2018automating, schelter2018deequ, schelter2019differential, swami2020data, polyzotis2017data, song2021auto}), which include things like schema change, unit change, increased nulls, as discussed earlier. Our observation is that although it is hard to quantify the benefit of $Q_i$ in a specific DQ incident, in the long run if future DQ issues are drawn from the set of common DQ problems, then we can still estimate the expected recall of a specific $Q_i$.

With that goal in mind, we carefully reviewed the DQ literature and cataloged a list of 10 common types of DQ issues in pipelines (schema change, unit change, increased nulls, etc.). We then vary parameters in each type of DQ to systematically capture different magnitudes of DQ deviations (e.g., different fractions of values are overwritten with nulls for ``increased nulls'', different magnitudes of changes for ``unit changes'', etc.), to construct a total of 60 procedures that can systematically inject DQ issues in a given column $C$ by varying $C$. We denote this set of synthetically generated DQ issues on $C$ as $\mathbf{D}(C)$. 
\iftoggle{fullversion}
{
    We give a full list of these common types of DQ issues and their parameters configurations in Appendix~\ref{apx:datagen}.
}
{
    In the interest of space, we defer a detailed description of these common DQ issues, and the corresponding procedures to generate $\mathbf{D}(C)$, to a full version of the paper~\cite{full}.
}

Intuitively, this $\mathbf{D}(C)$ models a wide variety of data deviations that may happen in $C$ due to DQ issues, which guides us to select salient $Q_i \in \mathbf{Q}$ that are unique ``statistical invariants'' specific to a pipeline, to best differentiate between the ``normal'' $H$, and the ``bad cases'' in $\mathbf{D}(C)$, for this pipeline.  This synthetic $\mathbf{D}(C)$  in effect becomes ``training data'' in ML, by assisting us to estimate the recall benefit of $Q_i$ in \avh. We give an example below to illustrate this.

\begin{example}
\label{ex:synthetic}
We revisit the example from the Introduction, where a recurring pipeline produces exactly 50 output rows, with one row for each of the 50 US states,  over all past $K$ executions in the history. In such a pipeline, for the ``\val{state}'' column in the output, one distinguishing feature is that the column has exactly 50 distinct values, or $Q_8: dist\_val\_cnt(\text{\val{state}}) = 50$ (which intuitively, is a ``statistical invariant'' only unique to this pipeline). 

When we synthetically inject DQ issues into $C$ (the ``\val{state}'' column) to produce $\mathbf{D}(C)$, we get variants of $C$, such as $C$ with an increased number of nulls, $C$ with values taken from a neighboring column (due to schema-change), etc. This constraint $Q_8: dist\_val\_cnt(\text{\val{state}})  = 50$ will catch most of such variations in $\mathbf{D}(C)$, thus producing a high expected ``recall'' and making $Q_8$ a desirable constraint to use. Intuitively, $Q_8$ is a good constraint for $C$ in this particular pipeline, because $dist\_val\_cnt = 50$ is a unique ``statistical variant'' specific to this column and pipeline, which has more discriminating power than other more generic constraints. 
\end{example}

Formally, we define the expected recall of $Q_i$ or $R(Q_i)$, as the set of issues it can detect in $\mathbf{D}(C)$, written as:
\begin{equation}
    R(Q_i) = \{C' | C' \in \mathbf{D}(C),  C' \text{~fails on~} Q_i\}
\end{equation} 

\textbf{Optimizing \avh with guarantees.}
Given a DQ program  with a conjunction of constraints $P(\mathbf{S}) = \bigwedge_{Q_i \in  \mathbf{S}} Q_i$ for some $\mathbf{S} \subseteq \mathbf{Q}$, naturally the recall of two constraints $Q_i, Q_j \in \mathbf{S}$ will overlap (with $ R(Q_i) \cup  R(Q_j) \neq \emptyset$). This leads to diminishing recall for similar DQ constraints in the same program, and requires us to leverage ``complementary'' constraints when generating DQ programs.

Given a conjunctive program $P(\mathbf{S})$ with $\mathbf{S} \subseteq \mathbf{Q}$, we  model the collective recall of $\mathbf{S}$, as the union of individual $R(Q_i)$, or $\bigcup_{Q_i \in \mathbf{S}}{ \text{R}(Q_i) }$. This becomes a concrete instantiation of the objective function in Equation~\eqref{eqn:recall} of the \avh problem.

Furthermore, recall that we can upper-bound the FPR of each $Q_i \in \mathbf{S}$, using Proposition~\ref{prop:chebyshev}-\ref{prop:clt}. Given a program $P(\mathbf{S})$, assume a worst-case where the FRP of each $Q_i \in \mathbf{S}$ is disjoint, we can then upper-bound the FPR of $P(\mathbf{S})$ (Equation~\eqref{eqn:fpr}), as the sum of the FRP bounds of each $Q_i$, or
$\sum_{Q_i \in \mathbf{S}} \text{FPR}(Q_i)$. 

Together, we rewrite the abstract \avh in Equation~\eqref{eqn:recall}-\eqref{eqn:conjunctive} as:
\begin{align}
\hspace{-1cm} \text{(\avh)} \qquad{} \max &~~ \bigg| \bigcup_{Q_i \in \mathbf{S}}{ \text{R}(Q_i) } \bigg| 
\label{eqn:recall-2} \\  
\mbox{s.t.} ~~ & \sum_{Q_i \in \mathbf{S}} \text{FPR}(Q_i) \leq \delta \label{eqn:fpr-2}\\
  & \mathbf{S} \subseteq \mathbf{Q} \label{eqn:conjunctive-2} 
\end{align}

Intuitively, we want to weigh the ``cost'' of selecting a constraint $Q_i$, which is its estimated  $FRP(Q_i)$, against its ``benefit'', which is its expected recall $R(Q_i)$. Furthermore, we need to account for the fact that constraints with overlapping recall benefits yield diminishing returns that is analogous to submodularity.
\iftoggle{fullversion}
{
    We prove that \avh is in general intractable and hard to approximate in Appendix~\ref{appendix:hardness}.
}
{
    (we prove  \avh is NP-hard in~\cite{full}).
}

Given that it is unlikely that we can solve \avh optimally in polynomial time, we propose an efficient algorithm that gives a constant-factor approximation of the best possible solution in terms of the objective value in Equation~\eqref{eqn:recall-2}, while still guaranteed to satisfy the FPR requirement in Equation~\eqref{eqn:fpr-2} in expectation. The pseudo-code of the procedure is shown in Algorithm~\ref{algo:avh}.

\begin{algorithm}[t]
\SetKw{kwReturn}{return}
 \Input{Metrics $\mathbf{M}$, a target-FPR $\delta$, column $C$, and its history $H = \{C_1, C_2, \ldots, C_K\}$}
 \Output{Conjunctive DQ Program $\mathbf{P(S)}$}
 
 $\mathbf{Q} \leftarrow$ Construct-Constraints$(\mathbf{M}, H)$ 


$S \leftarrow \emptyset$, $FPR \leftarrow 0$

  \While{$FPR \leq \delta$}
    {
       $Q_s=\arg max_{Q_i \in \mathbf{Q}} (\frac{| R(Q_i) \setminus \bigcup_{Q_j \in \mathbf{S}}R(Q_j) |}{FRP(Q_i)})$
       
     \If{$FPR(Q_s) + FPR \leq \delta$}
         {
             $S \gets S\cup Q_s$\\
             $FPR \gets FPR + FPR(Q_s)$
         }
  
     $\mathbf{Q} \gets \mathbf{Q} \backslash Q_s$ 
    }

$Q_m = \arg max_{Q_m \in \mathbf{Q}} (|R(Q_m)|)$

\If{$| \bigcup_{Q_i \in \mathbf{S}}{ \text{R}(Q_i) } |  < |R(Q_m)|$}
    {
    $S \leftarrow \{Q_m\}$
    }
\kwReturn $P(S)$
\caption{Auto-Validate by-History (\avh)}
\label{algo:avh}
\end{algorithm}

Algorithm~\ref{algo:avh} takes as input a set of metrics $\mathbf{M}$, an FPR target $\delta$, as well as a column $C$ together with its history $H = \{C_1, C_2, \ldots, C_K\}$. We start by constructing a large space of possible DQ constraints $\mathbf{Q}$  (Line 1), 
using the given $\mathbf{M}$ and $H$ (Section~\ref{sec:dq_constraint}).

Using this $\mathbf{Q}$, we then iterate to find a solution $\mathbf{S} \subseteq \mathbf{Q}$, which is first initialized to empty. In each iteration, we select the best possible $Q_s$ from remaining constraints in $\mathbf{Q}$ that have not yet been selected (Line 4), based on a cost/benefit calculation, where the ``benefit'' of adding a constraint $Q_i$ is its increment recall gain on top of the current solution set $\mathbf{S}$, written as  $| R(Q_i) \setminus \bigcup_{Q_j \in \mathbf{S}}R(Q_j) |$, divided by its additional ``cost'' of adding $Q_i$, which is the increased FPR when adding $FPR(Q_i)$. The selected $Q_s$ is then simply the constraint that maximizes this benefit-to-cost ratio, as shown in Line 4. We add this $Q_s$ to the current solution $\mathbf{S}$, update the current total FPR as well $\mathbf{Q}$, and iterate until we exhaust $\mathbf{Q}$. 

In the final step (Line 9), we compare the best possible singleton $Q_m \in \mathbf{Q}$ that maximizes recall without violating the FPR requirement, with the current $\mathbf{S}$ from above. We pick the best between $\{Q_m\}$ and $\mathbf{S}$ based on their recall as our final solution to \avh. 

We show that Algorithm~\ref{algo:avh} has the following properties
\iftoggle{fullversion}
{
    (a proof of which can be found in Appendix~\ref{appendix:approx}).
}
{
    a proof of which can be found in~\cite{full} in the interest of space. 
}

\begin{proposition}
\label{prop:approx}
Algorithm~\ref{algo:avh} is a $(\frac{1}{2}-\frac{1}{2e})$-approximation algorithm for the \avh problem in Equation~\eqref{eqn:recall-2}, meaning that the  objective value produced by  Algorithm~\ref{algo:avh} is at least $(\frac{1}{2}-\frac{1}{2e}) \text{OPT}$, when $\text{OPT}$ is the objective value of the optimal solution to \avh. Furthermore, 
Algorithm~\ref{algo:avh} produces a feasible solution in expectation, meaning that the expected FPR of its solution is guaranteed to satisfy Equation~\eqref{eqn:fpr-2}.
\end{proposition}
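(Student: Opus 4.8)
The plan is to observe that \avh, as written in Equation~\eqref{eqn:recall-2}--\eqref{eqn:conjunctive-2}, is an instance of maximizing a monotone submodular set function under a single knapsack (budget) constraint, and to instantiate the classical ``modified greedy'' analysis (the density-greedy rule combined with taking the better of its output and the best singleton; due to Khuller--Moss--Naor for coverage and Lin--Bilmes for the submodular generalization). First I would pin down the two structural facts. (i) The objective $f(\mathbf{S})=\big|\bigcup_{Q_i\in\mathbf{S}}R(Q_i)\big|$ is a coverage function on the finite ground set $\mathbf{D}(C)$: each $Q_i$ ``covers'' the subset $R(Q_i)\subseteq\mathbf{D}(C)$ of synthetic DQ issues it catches, and a union-of-sets cardinality is monotone, submodular, and satisfies $f(\emptyset)=0$. (ii) The feasible region $\{\mathbf{S}\subseteq\mathbf{Q}:\sum_{Q_i\in\mathbf{S}}FPR(Q_i)\le\delta\}$ is exactly a knapsack constraint with item costs $c(Q_i):=FPR(Q_i)$ (the upper bounds supplied by Propositions~\ref{prop:chebyshev}--\ref{prop:clt}) and budget $\delta$; I would assume w.l.o.g.\ (discarding the rest) that every $Q_i\in\mathbf{Q}$ satisfies $c(Q_i)\le\delta$, i.e.\ is individually feasible. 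With this dictionary, the while-loop of Algorithm~\ref{algo:avh} is precisely the rule ``repeatedly add the remaining constraint with the largest marginal-recall-to-FPR ratio, skipping any that no longer fit in the budget'', and the final step replaces the result by the single highest-recall constraint if that is better.

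\textbf{Approximation ratio.} I would carry out the standard density argument, truncated at the first budget violation. Let $Q_{(1)},Q_{(2)},\dots$ be the constraints examined by the loop in order, $\mathbf{S}_t$ the set of those \emph{accepted} among the first $t$, and $Q_{(r)}$ the first constraint that is \emph{rejected} because adding $FPR(Q_{(r)})$ would break the budget. Up to step $r$ the algorithm only ever removes a constraint from $\mathbf{Q}$ by accepting it, so at the moment each of $Q_{(1)},\dots,Q_{(r)}$ is chosen all constraints of the optimal solution not yet taken are still available; combined with monotonicity and submodularity (which force the best available ratio to be at least $(\text{OPT}-f(\text{current set}))/\delta$, since the optimum has total cost $\le\delta$), the usual one-step inequality $f(\mathbf{S}_i)\ge f(\mathbf{S}_{i-1})+\tfrac{c(Q_{(i)})}{\delta}\big(\text{OPT}-f(\mathbf{S}_{i-1})\big)$ holds for $i<r$, and likewise $f(\mathbf{S}_{r-1}\cup\{Q_{(r)}\})\ge f(\mathbf{S}_{r-1})+\tfrac{c(Q_{(r)})}{\delta}\big(\text{OPT}-f(\mathbf{S}_{r-1})\big)$. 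Unrolling gives $f(\mathbf{S}_{r-1}\cup\{Q_{(r)}\})\ge\big(1-\prod_{i\le r}(1-c(Q_{(i)})/\delta)\big)\text{OPT}\ge\big(1-e^{-(\sum_{i\le r}c(Q_{(i)}))/\delta}\big)\text{OPT}$, and since $Q_{(r)}$ did not fit we have $\sum_{i\le r}c(Q_{(i)})>\delta$, hence $f(\mathbf{S}_{r-1}\cup\{Q_{(r)}\})\ge(1-1/e)\text{OPT}$. Subadditivity of $f$ (a consequence of monotone submodularity with $f(\emptyset)=0$) then yields $f(\mathbf{S}_{r-1})+f(\{Q_{(r)}\})\ge(1-1/e)\text{OPT}$, so $\max\{f(\mathbf{S}_{r-1}),f(\{Q_{(r)}\})\}\ge(\tfrac12-\tfrac1{2e})\text{OPT}$. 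By monotonicity the loop's final set $\mathbf{S}$ has $f(\mathbf{S})\ge f(\mathbf{S}_{r-1})$; and because $Q_{(r)}$ is individually feasible, the best-singleton step guarantees $f(\{Q_m\})\ge f(\{Q_{(r)}\})$. Hence the returned solution attains objective value at least $(\tfrac12-\tfrac1{2e})\text{OPT}$. (If the loop never rejects anything it exits with $\mathbf{Q}$ exhausted, so $\mathbf{S}$ contains every constraint and $f(\mathbf{S})=|\mathbf{D}(C)|\ge\text{OPT}$, which is stronger.)

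\textbf{Feasibility in expectation.} For the FPR guarantee I would simply use Boole's inequality: on clean data $P(\mathbf{S})=\bigwedge_{Q_i\in\mathbf{S}}Q_i$ raises a false alarm iff at least one $Q_i\in\mathbf{S}$ is violated, so $FPR(P(\mathbf{S}))\le\sum_{Q_i\in\mathbf{S}}FPR(Q_i)$ holds pointwise; taking expectations over clean inputs and over the historical estimates $\mu,\sigma$, and invoking Propositions~\ref{prop:chebyshev}--\ref{prop:clt}, $\mathrm{E}[FPR(P(\mathbf{S}))]\le\sum_{Q_i\in\mathbf{S}}\mathrm{E}[FPR(Q_i)]\le\sum_{Q_i\in\mathbf{S}}c(Q_i)\le\delta$, where the last step is exactly the invariant the loop maintains when it updates $FPR$. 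The alternative output $\{Q_m\}$ is feasible a fortiori since $c(Q_m)\le\delta$, so whichever branch the final comparison takes, the returned program satisfies Equation~\eqref{eqn:fpr-2} in expectation.

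\textbf{Main obstacle.} The one place that needs care is the interaction between the density analysis and the ``skip-and-continue'' behaviour of Algorithm~\ref{algo:avh}: after the first rejection the loop keeps discarding infeasible constraints, so a later step might no longer ``see'' all of the optimum and the per-step inequality can fail there. The fix, which I would spell out, is to stop the analysis at the first rejection $Q_{(r)}$ (before which the algorithm is identical to the textbook modified greedy), bound $f(\mathbf{S}_{r-1}\cup\{Q_{(r)}\})$ there, and then use only monotonicity of $f$ to pass from $\mathbf{S}_{r-1}$ to the loop's actual output $\mathbf{S}$ --- together with the observation that $Q_{(r)}$ must itself be a legal singleton, which is what makes the best-singleton comparison in Line~9 meaningful. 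Everything else --- monotone submodularity of the coverage objective and the union bound for FPR --- is routine.
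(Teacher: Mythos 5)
Your proof is correct and takes essentially the same route as the paper's: the paper identifies Equations~\eqref{eqn:recall-2}--\eqref{eqn:conjunctive-2} as an instance of Budgeted Maximum Coverage (sets $S_i$ with cost $FPR(Q_i)$, unit-weight elements given by $R(Q_i)$), invokes Theorem~3 of Khuller--Moss--Naor for the $(\frac{1}{2}-\frac{1}{2e})$ ratio, and uses the same union-bound argument for the expected-FPR feasibility. You simply reproduce the Khuller--Moss--Naor modified-greedy analysis inline (truncating at the first rejection, the $(1-\frac{1}{e})$ bound, subadditivity, best-singleton comparison) rather than citing it, which if anything is more careful than the paper about the skip-and-continue behaviour of the loop.
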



\vspace{-3mm}
\section{Experiments}
We evaluate the effectiveness and efficiency of \avh, using real production pipelines. {Our code will be shared at~\cite{code} after an internal review.}

\subsection{Evaluation Benchmarks}
\label{sec:benchmark}

\textbf{Benchmarks.} We perform rigorous evaluations, using real and synthetic benchmarks derived from production pipelines.

\textbf{- \textsc{Real}}. We construct a \textsc{Real} benchmark using production pipelines from Microsoft's 
internal big-data platform 
~\cite{zhou2012scope}.
We perform a longitudinal study of the pipelines, by sampling 1000 numeric columns and 1000 categorical columns from these recurring pipelines, and trace them over 60 consecutive executions  (which may recur daily or hourly). For each column $C$, this generates a sequence of history $\{C_1, C_2, \ldots, C_{60}\}$, for a total of 2000 sequences.  

We evaluate the precision/recall of each algorithm $\mathcal{A}$ (\avh or otherwise) on the 2000 sequences, by constructing sliding windows of sub-sequences for back-in-time tests of $\mathcal{A}$'s precision/recall (following similar practices in other time-series domains~\cite{campbell2005review, chatfield2000time}):

\underline{Precision}. Given a sequence of past runs $H = \{C_1, C_2, \ldots, C_K\}$, if an algorithm $\mathcal{A}$ looks at $H$ together with the real $C_{K+1}$ that arrives next, and predicts $C_{K+1}$ to have data-quality issues, then it is likely a false-positive detection, because the vast majority of production pipeline runs are free of DQ issues (if there were anomalous runs, they would have been caught and fixed by engineers, given the importance of the production data). To validate that it is indeed the case in our test data, we manually inspected a sample of our production pipeline data and did not identify any DQ issues. (Details of the process can be found  
    \iftoggle{fullversion}
    {
       in Appendix~\ref{sec:error}.)
    }
    {
        in a full version of the paper~\cite{full}.)
    }

For each full sequence $S = \{C_1, C_2, \ldots, C_{60}\}$, we construct a total of 30 historical sliding windows (each with a length of 30), as $H_{30} = \{C_1, C_2, \ldots, C_{30}\}$, $H_{31} = \{C_2, C_3, \ldots, C_{31}\}$, etc. Then at time-step $K$ (e.g., 30), and given the history $H_K = \{C_{K-29},$ $C_{K-28},$ $\ldots, C_{K}\}$, we ask each algorithm $\mathcal{A}$ to look at $H_K$ and predict whether the next batch of real data $C_{K+1}$ has a DQ issue or not, for a total of ($2000 \times 30) = $ 60K precision tests.

\underline{Recall}. For recall, because there are few documented DQ incidents that we can use to test algorithms at scale, we systematically construct recall tests as follows. Given a sliding window of prefix $H = \{C_1,$ $C_2,$ $\ldots, C_{30}\}$, we swap out the next batch of real data $C_{31}$, and replace it with a column $C'_{31}$ that looks ``similar'' to $C_{31}$ (e.g., with a similar set of values). 

Specifically, we use $C_{31}$ as the ``seed query'', to retrieve top-20 columns most similar to $C_{31}$ based on content similarity (Cosine), from the underlying data lake that hosts all production pipelines. 
Because $C'_{31}$ will likely have subtle differences from the real $C_{31}$ (e.g., value-distributions, row-counts, etc.), algorithm $\mathcal{A}$ should ideally detect as many $C'_{31}$ as DQ issues as possible (good recall), without triggering false alarms on the real $C_{31}$ (good precision). Because we retrieve top-20 similar columns, this generates a total of $2000 \times 20 = $ 40K recall tests.\footnote{It should be noted that some of the $C'_{31}$ columns we retrieve may be so similar to  $C_{31}$ that they become indistinguishable, making it impossible for any $\mathcal{A}$ to detect such  $C'_{31}$ as DQ issues. This lowers the best-possible recall, but is fair to all algorithms.}

\textbf{-\textsc{Synthetic}}. In addition, we create a \textsc{Synthetic} benchmark, where the precision tests are identical to \textsc{Real}. For recall tests, instead of using real columns that are similar to $C_{31}$, we synthetically inject 10 common DQ issues reported in the literature into $C_{31}$ (described in Appendix~\ref{apx:datagen}). This allows us to systematically test against a range of DQ issues with different levels of deviations.

\textbf{Evaluation metrics.} For each algorithm $\mathcal{A}$, we report standard precision/recall results on the 60K precision tests and 40K recall tests described above. We use standard precision and recall, defined as $precision = \frac{TP}{TP + FP}$,  $recall = \frac{TP}{TP + FN}$, where TP, FP, and FN are True-Positive, False-Positive, and False-Negative, respectively.

\subsection{Methods Compared}
\label{subsec:methods}
We compare with an extensive set of over 20 methods, including strong commercial solutions, as well as state-of-the-art  algorithms from the literature of anomaly detection and data cleaning. We categorize these methods into groups, which we describe  below.

\textbf{Commercial solutions}. We compare with the following commercial solutions that aim to automatically validate data pipelines.
\underline{Google TFDV}. We compare with Google's 
Tensorflow Data Validation (TFDV)~\cite{caveness2020tensorflow}. We install the latest version from Python pip,
and use recommended settings in~\cite{tfdv-settings}. 

\noindent \underline{Amazon Deequ}. We compare with Amazon's Deequ~\cite{schelter2018deequ, schelter2019unit}, 
using configurations suggested in their documentations~\cite{deequ-settings}. 

\noindent \underline{Azure Anomaly Detector}. Azure Anomaly 
Detector~\cite{Azure-Anomaly} is a cloud-based anomaly 
detection service for time-series data, utilizing state-of-the-art algorithms in the literature~\cite{ren2019time}.

\noindent \underline{Azure ML Drift Detection}. Azure ML has the ability to detect data drift over time~\cite{aml_drift}. We use data from the past $K$ executions as the ``baseline'' and data from a new execution as the ``target''.

\textbf{Time-Series-based anomaly detection}. There is a large body of literature on detecting outliers from time-series data. We use a recent benchmark study~\cite{timeseries-anomaly-survey} to identify the following four best-performing methods, and use the same implementations provided in~\cite{time-series-code} on our statistical data for comparison purposes.

\noindent  \underline{LSTM-AD}~\cite{malhotra2016lstm} employs LSTM networks to learn and reconstruct time series. It uses the reconstruction error to detect anomalies.

\noindent \underline{Telemanom}~\cite{hundman2018detecting} also uses LSTM networks to reconstruct time-series telemetry, identifying anomalies by comparing expected and actual values and applying unsupervised thresholds. 

\noindent \underline{Health-ESN}~\cite{chen2020imbalanced} uses the classical Echo State Network (ESN) and is trained on normal data. Anomalies are detected when the error between the input and predicted output exceeds a certain threshold, which is determined through an information theoretic analysis. 

\noindent \underline{COF}~\cite{tang2002enhancing} is a local density-based method that identifies time-series outliers, by detecting deviations from spherical density patterns.

\textbf{Classical anomaly detection.} We also compare with the following anomaly detection methods developed in tabular settings.

\noindent \underline{One-class SVM}~\cite{scholkopf1999support} is a popular ML method for anomaly detection, where only one class of training data is available. 
We train one-class SVM using historical data, and use it to make predictions. 

 \noindent  \underline{Isolation Forest}~\cite{liu2008isolation} is also a popular method for anomaly detection based on decision trees. We again train Isolation Forest using historical data, and then predict on newly-arrived data. 
 
\noindent \underline{Local Outlier Factor (LOF)}~\cite{breunig2000lof} is another one-class method for anomaly detection based on data density. We configure LOF in a way similar to other one-class methods above. 

\noindent \underline{K-MeansAD}~\cite{lima2010anomaly} is also a classical anomaly detection method, which is based on the unsupervised K-Means clustering. 

\noindent  \underline{ECOD}~\cite{li2022ecod} identifies outliers by estimating the distribution of the input data and calculating the tail probability for each data point.

\noindent  \underline{Average KNN (Avg-KNN)} is another outlier detection method, and was used to automate data validation in a pipeline setting~\cite{redyuk2021automating} that is similar to the scenario considered in our work.


\textbf{Statistical tests.} We compare with the following classical statistical tests used to detect outliers in distributions.

\noindent \underline{Kolmogorov–Smirnov (KS)} is a classical statistical hypothesis test for homogeneity between two numeric distributions, and is used in prior work to detect data drift~\cite{polyzotis2017data}. We vary its p-value thresholds to generate PR curves.  

\noindent \underline{Chi-squared} is a classical hypothesis test for homogeneity between two categorical distributions, and also used in prior work~\cite{polyzotis2017data}. We vary its p-value thresholds like above. 

\noindent \underline{Median Absolute Deviation (MAD)} is a measure of statistical dispersion from robust statistics~\cite{huber2011robust}, and has been used to detect quantitative outliers~\cite{hellerstein2008quantitative}. We use  MAD-deviation (Hampel X84, similar to z-scores)  to produce predictions~\cite{hellerstein2008quantitative}.

\textbf{Database constraints}. There is a large literature on using database constraints for data cleaning. We compare with these methods:

\noindent \underline{Functional Dependency (FD)}. FD is widely-used to detect data errors in tables~\cite{mahdavi2019raha, heidari2019holodetect, beskales2014sampling, papenbrock2016hybrid}, by exploiting correlations between columns (e.g., salary $\rightarrow$ tax-rate). Since not all columns can be ``covered'' by FD, to estimate its best possible recall, we detect all possible FDs from our 2000 test tables, and mark a test column $C$ to be ``covered'' if there exists a detected FD that has $C$ in its RHS. We report this as FD-UB (Functional Dependency Upper-bound). 

\noindent \underline{Order Dependency (OD)~\cite{szlichta2016effective, langer2016efficient, szlichta2013expressiveness}}. We discover OD using the same statistical information by ordering tables with statics in time.

\noindent \underline{Sequential Dependency (SD)}~\cite{golab2009sequential, caruccio2015relaxed} generalizes OD, and we discover SD using the same statistical information over time.

\noindent \underline{Denial Constraints (DC)}. We use the approach in~\cite{chu2013discovering} to discover DC that generalizes FD and OD, and use them for validating data. 



\textbf{\avhfull{} (\avh)}. This is our proposed \avh method as described in Section~\ref{sec:algo}.

\begin{figure*}[t!]
\vspace{-16mm}
    \centering
    \subfigure[Test on numerical data]{
        \label{fig:num-test on real}
        \includegraphics[height=7.5cm]{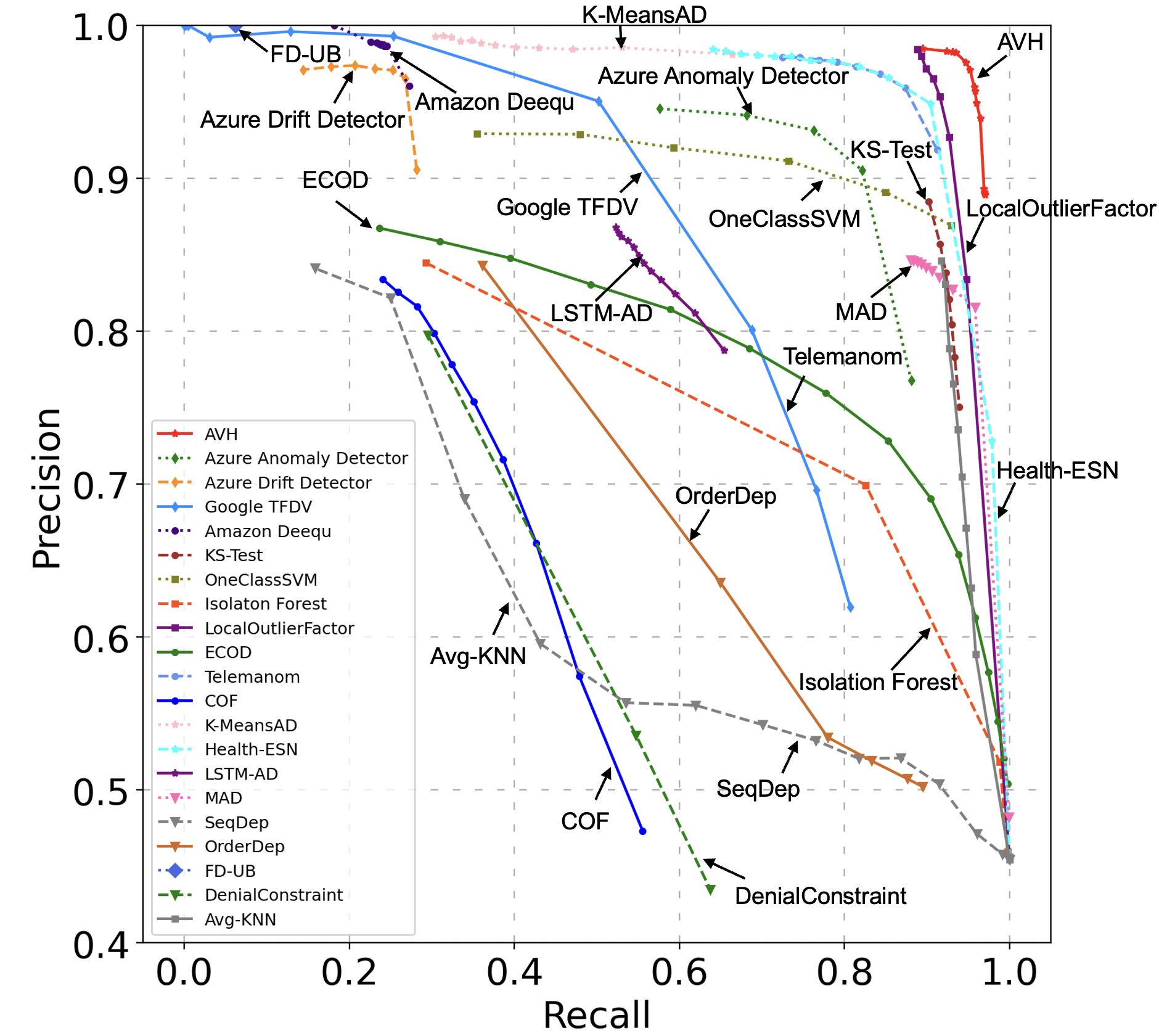}
    } 
    \subfigure[Test on categorical data]{
        \label{fig:num-test on syn}
        \includegraphics[height=7.5cm]{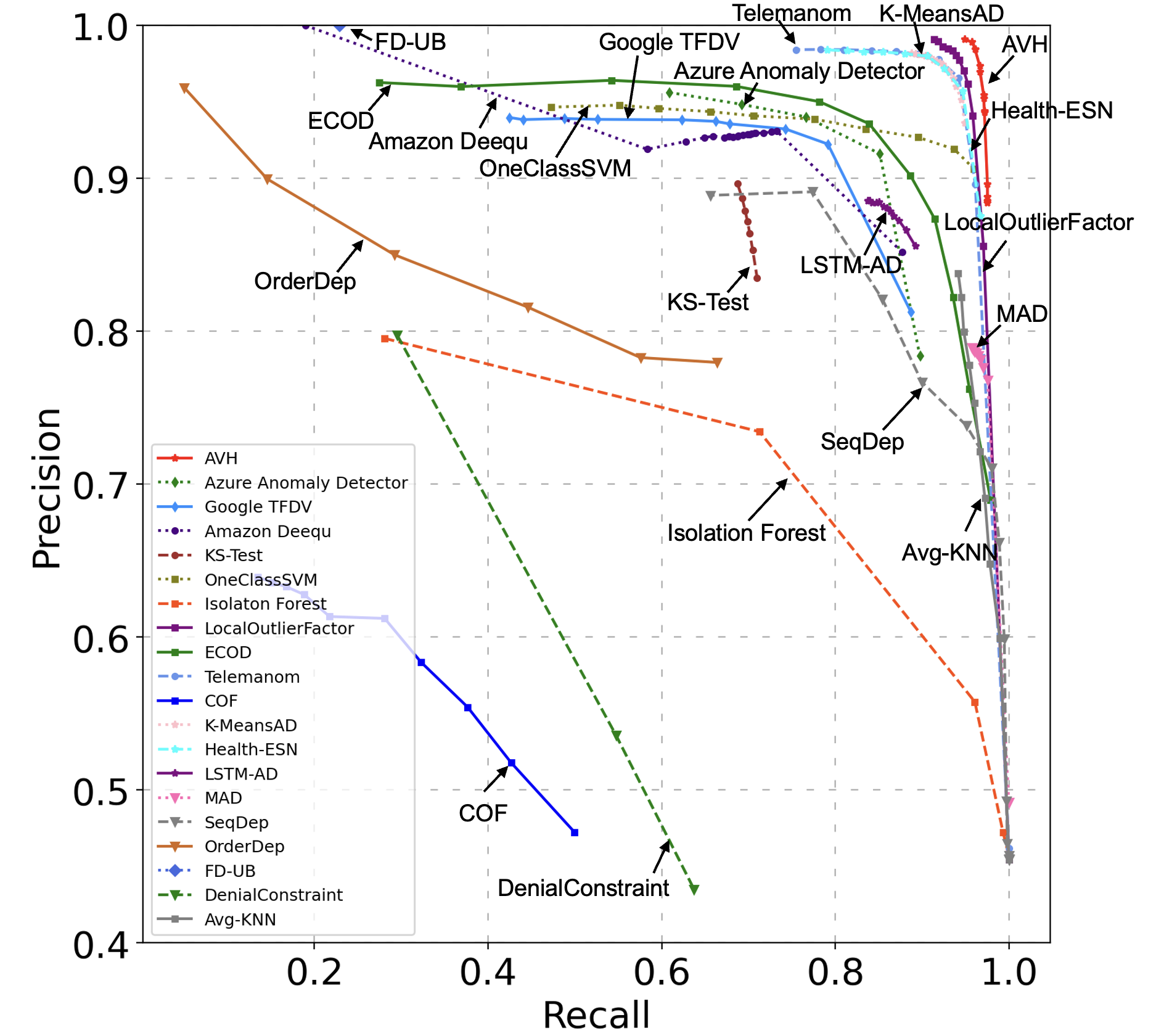}
    } \vspace{-5mm}
    \caption{{Precision/Recall results on the \textsc{Real} benchmark (2000 real pipelines).}}
    \label{fig:real}
    \vspace{-2mm}
\end{figure*}

\begin{figure*}[t!]
\vspace{-3mm}
    \centering
    \hspace{-3mm}
    \subfigure[Test on numerical data]{
        \label{fig:cat-test on real}
        \includegraphics[height=7.5cm]{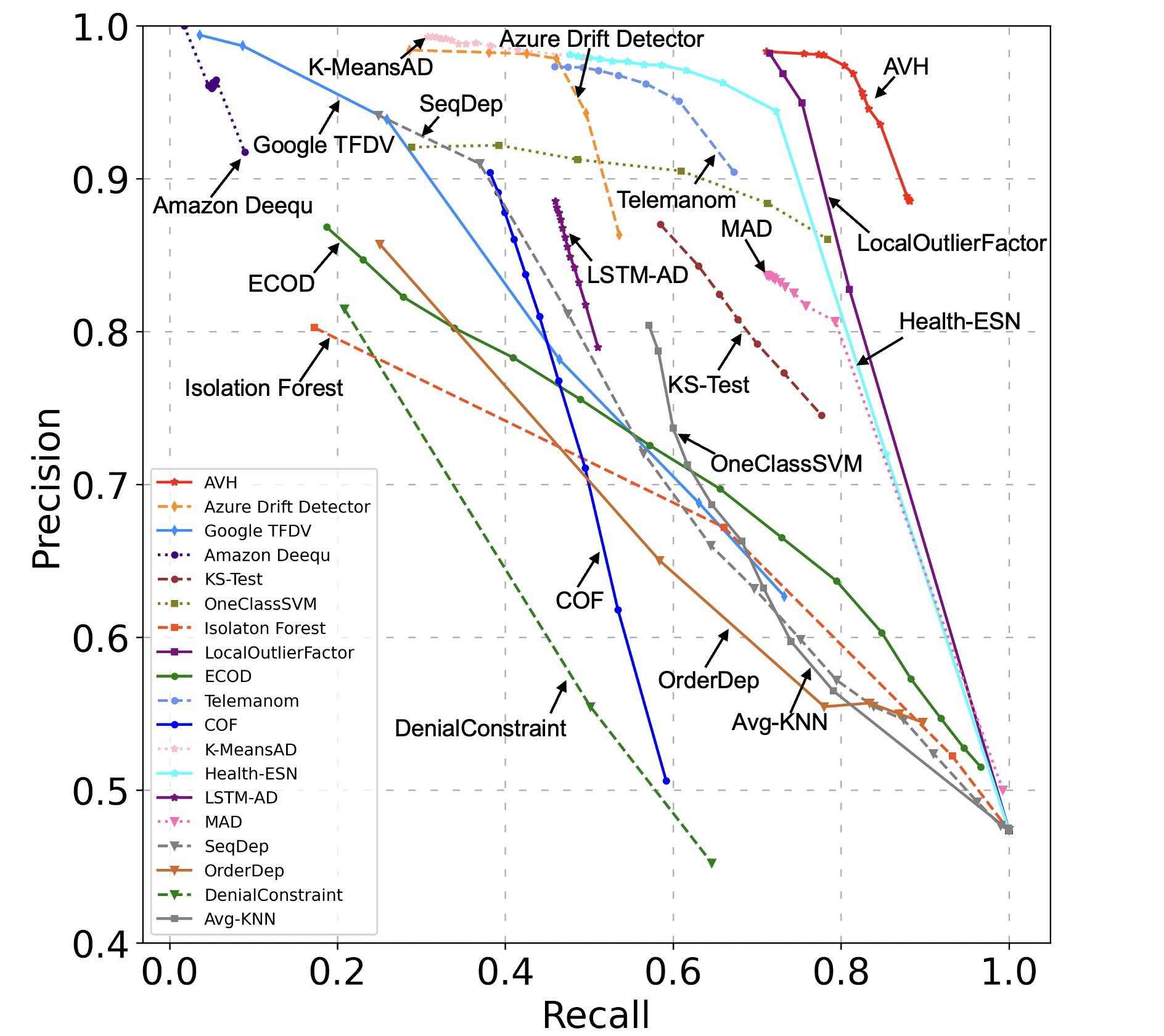}} 
    \hspace{1mm}
    \subfigure[Test on categorical data]{
        \label{fig:cat-test on syn}
        \includegraphics[height=7.5cm]{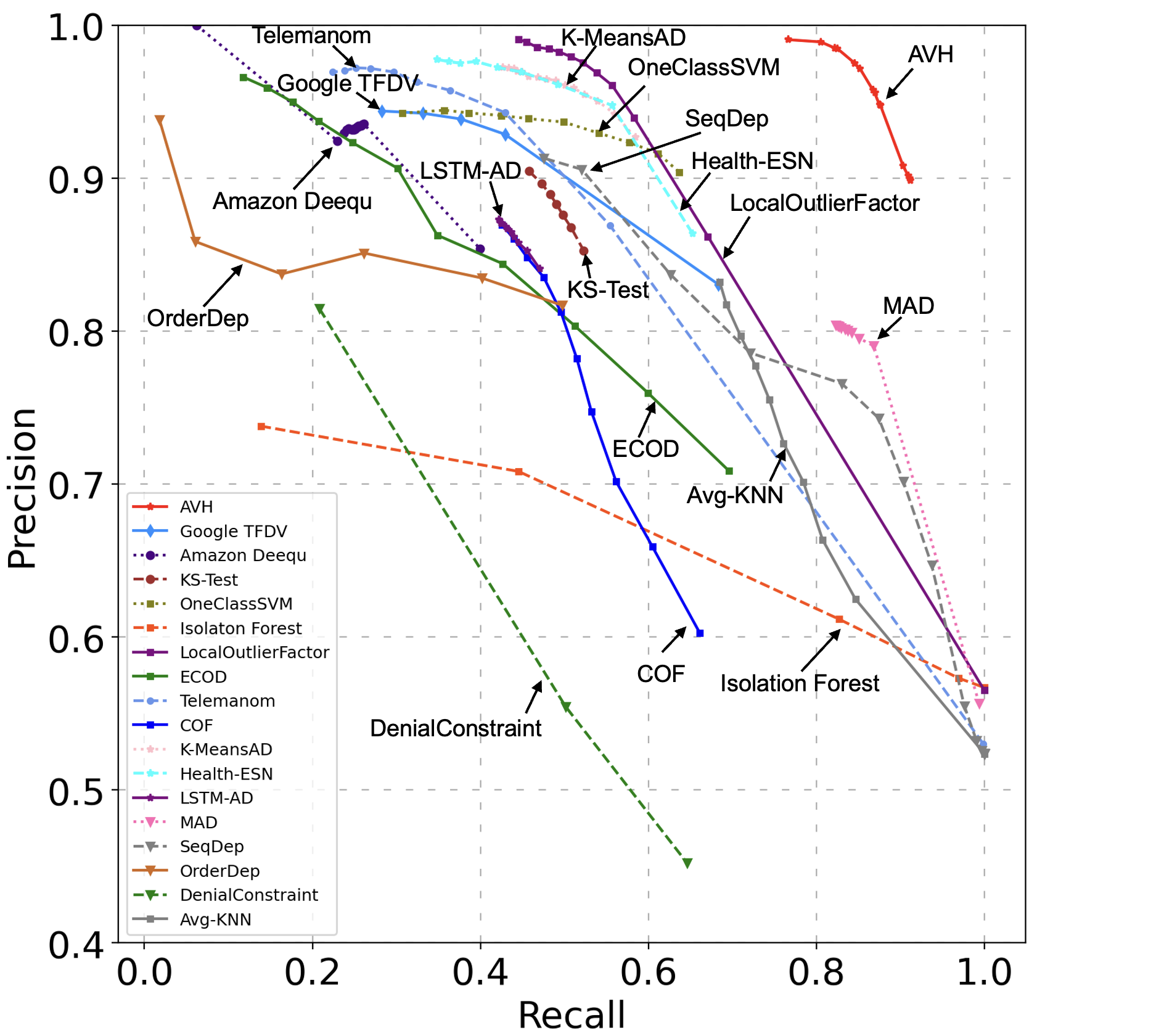}}
    \vspace{-5mm}
    \caption{Precision/Recall results on the \texttt{Synthetic} benchmark.}
    \label{fig:synthetic-all}
\end{figure*}

\vspace{-2mm}
\subsection{Experiment Results}
\textbf{Overall quality comparisons.}
Figure \ref{fig:real} and \ref{fig:synthetic-all} show the average precision/recall of different methods on the \textsc{Real} and \textsc{Synthetic} benchmark, respectively. \avh is at the top-right corner with high precision/recall, outperforming other methods across all cases.

Anomaly detection methods, especially LOF and Health-ESP, are the best performing baselines. However, these methods use each statistical attribute just as a regular dimension in a data record, \textit{while  our \avh exploits different statistical properties of the underlying metrics} (Chebyshev, Chantelli, CLT, etc., in Proposition~\ref{prop:chebyshev}-\ref{prop:clt}), which gives \avh an unique advantage over even the state-of-the-art anomaly detection methods, underscoring the importance of our approach in validating data from recurring data pipelines.

Commercial data-validation solutions like Amazon Deequ and Google TFDV have high precision but low recall, because they use  predefined and static configurations (e.g., JS-Divergence and L-infinity are the default for TFDV), which lack the ability adapt to different pipelines, and thus the low recall. 
Similarly, statistical tests (KS/Chi-squared/MAD) use fixed predictors that also cannot adapt to different pipelines, and show sub-optimal performance.

Constraint-based data cleaning methods from the database literature (e.g., FD, DC, OD, SD, etc.) are not competitive in our tests, because these methods are  designed to handle single table snapshot, typically using manually designed constraints.

Figure~\ref{fig:syn-by-types} shows breakdown of \avh results in Figure~\ref{fig:synthetic-all} by different types of DQ issues  in the \textsc{Synthetic} benchmark. We can see that \avh is effective against most types of DQ issues (schema-change, distribution-change, data-volume-change, etc.). On numerical data, we see that it is the most difficult to detect ``character-level perturbation'' (randomly perturbing one digit character for another digit with small probabilities) and ``character deletion'' (randomly removing one digit character with small probabilities), which is not unexpected since such small changes may not always change the underlying numerical distributions. 
On categorical data, ``character-level perturbation'' is also the most difficult to detect, but \avh is effective against ``character deletion'' and ``character insertion''. 

\textbf{Sensitivity and ablation studies.} We perform extensive experiments to study the sensitivity of \avh (to the length of history, different types of data-errors, target FPR $\tau$, etc.). We also perform an ablation study to understand the importance of \avh components. In the interest of space, we present these additional experimental results in Appendix~\ref{apx:exp-sensitivity} and Appendix~\ref{apx:exp-ablation}, respectively.

\begin{figure}[t]            
\vspace{-4mm}
    \centering\includegraphics[height=3.5cm]{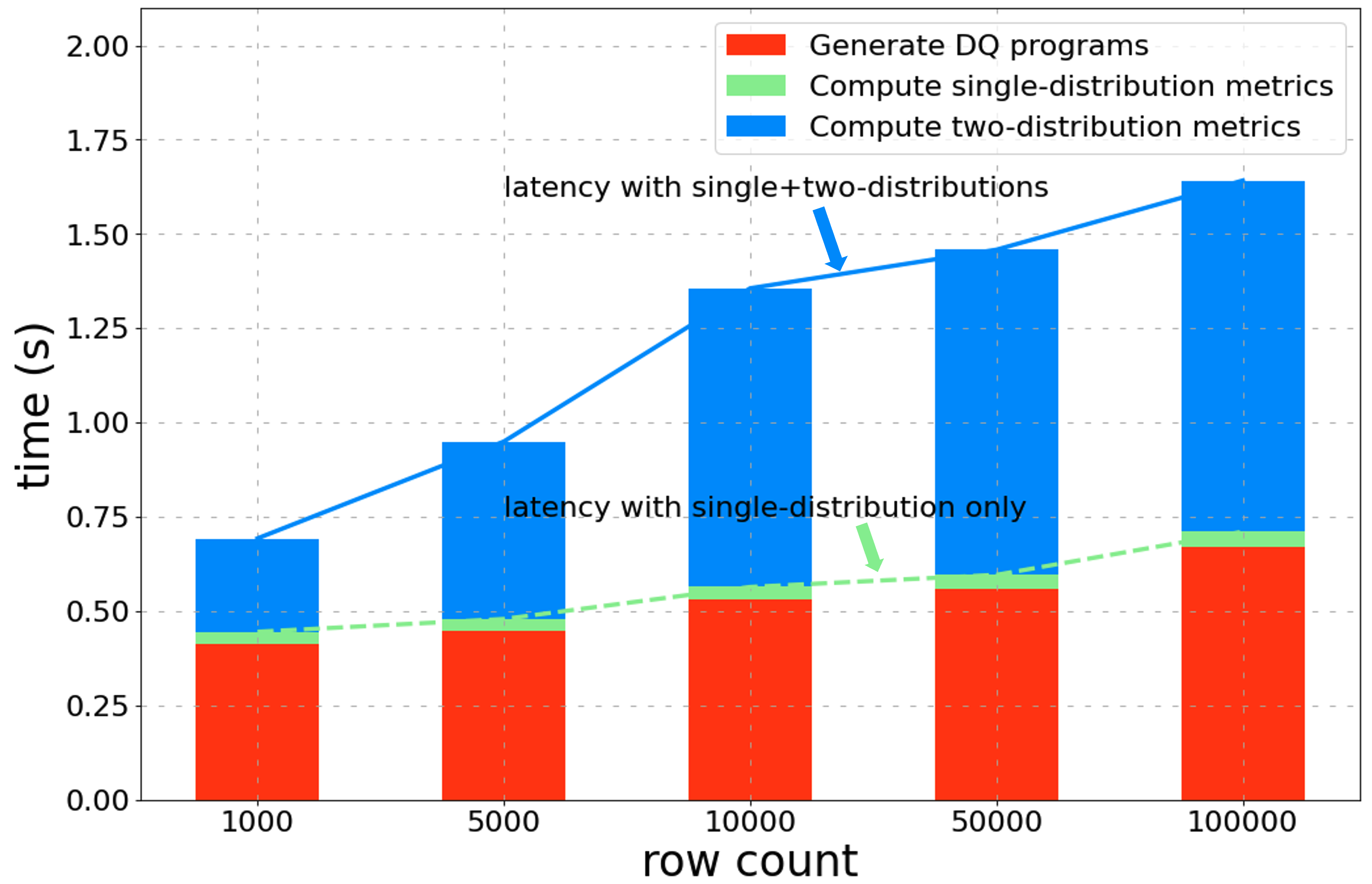}
    \vspace{-4mm}
    \caption{{Efficiency analysis of \avh}}
    \label{fig:efficiency}
    \vspace{-5mm}
\end{figure}

\textbf{Efficiency.}
Figure~\ref{fig:efficiency} shows the end-to-end latency of \avh to process a new batch of data. We vary the number of rows in a column $C$ (x-axis), and report latency averaged over 100 runs.
Recall that \avh can be used offline, since DQ constraints can be auto-installed on recurring pipelines (without involving humans). Nevertheless, we want to make sure that the cost of \avh is small.
Figure~\ref{fig:efficiency} confirms this is the case -- the latency of \avh on 100K rows is 1.6 seconds on average, making this interactive. 
The figure further breaks down the time spends into three components:  (1) computing single-distribution metrics (green), (2) two-distribution metrics (blue), and (3) DQ programs (red), where computing two-distribution metrics (blue) takes the most time, which is expected. 
Overall, we see that the overall latency grows linearly with an increasing number of rows, indicating good scalability. 


\vspace{-2mm}
\section{Conclusions}

In this work, we develop an \avhfull (\avh) framework to automate data-validation in recurring pipelines. \avh can automatically generate explainable DQ programs that are provably accurate, by leveraging the statistical properties of the underlying metrics. Extensive evaluations on production pipelines show the efficiency and effectiveness of \avh.  

\clearpage

\break
\bibliographystyle{ACM-Reference-Format}
{\footnotesize
\bibliography{Auto-Validate-History}
}

\appendix

\iftoggle{fullversion}
{
\clearpage
    
\section{Details of statistical metrics}
\label{apx:detailed-statistics}

Table~\ref{tab:numeric-complete} and Table~\ref{tab:category-complete} give detailed descriptions of the statistical metrics used in \avh, which corresponds to simplified versions in Table~\ref{tab:numeric} and Table~\ref{tab:category}, respectively.

\begin{small}
\begin{table*}[t]
\vspace{-10mm}
    \centering
    \begin{tabular}{ccp{0.7\textwidth}}
    \toprule
    Type & Metric  & Description \\ 
    \midrule
    \multirow{5}*{Two-distribution} & $EMD$ & Earth Mover's distance (Wasserstein metric) \cite{rubner1998metric} between two numeric distributions \\
                        & $JS\_div$  & Jensen–Shannon divergence \cite{endres2003new} between two numeric distributions  \\
                        & $KL\_div$  & Kullback–Leibler divergence  (relative entropy)~\cite{endres2003new} between two numeric distributions \\
                        & $KS\_dist$  & Two-sample Kolmogorov–Smirnov test \cite{massey1951kolmogorov} between two numeric distribution (using p-value) \\
                        & $Cohen\_d$ & Cohen's d \cite{cohen2013statistical} that quantify the effect size between  two numeric distributions\\
    \hline
    \multirow{9}*{Single-distribution}  & $min$ & the minimum value  observed from a numeric column \\ 
                        & $max$  & the maximum value observed from a numeric column\\
                        & $mean$  & the arithmetic mean observed from a numeric column\\
                        & $median$  & the median value observed from a numeric column\\
                        & $sum$  & the sum of values  observed from a numeric column\\
                        & $range$  & the difference between max and min  observed from a numeric column\\
                        & $row\_count$ & the number of rows  observed from a numeric column\\
                        & $unique\_ratio$  & the fraction of unique values  observed from a numeric column\\
                        & $complete\_ratio$ & the fraction of complete (non-null) values  observed from a numeric column \\
    \bottomrule 
    \end{tabular}
    \caption{Statistical metrics used to generate DQ constraints for numerical data}
    \label{tab:numeric-complete}
\end{table*}
\end{small}

\begin{small}
\begin{table*}[t]
\vspace{-4mm}
    \centering
    \begin{tabular}{ccp{0.7\textwidth}}
    \toprule
    Metric Type & Metric & Description \\ 
    \midrule
    \multirow{12}*{Two-distribution} & $L_1$ & L-1  distance~\cite{cantrell2000modern} between two categorical distribution  \\
                        & $L_{inf}$ & L-infinity   distance~\cite{cantrell2000modern} between two categorical distributions\\
                        & $Cosine$ & Cosine distance \cite{gupta2021deep}{} between two categorical distributions\\
                        & $Chisquared$ & Chi-squared test \cite{fienberg1979use} using p-value between two categorical distributions\\
                        & $JS\_div$ & Jensen–Shannon divergence~\cite{endres2003new} between two categorical distributions\\
                        & $KL\_div$ & Kullback–Leibler divergence  (relative entropy)~\cite{endres2003new} between two categorical distribution\\   
                        & $Pat\_L_1$ & L-1 distance between the pattern profiles extracted from two categorical distributions \\
                        & $Pat\_L_{inf}$ & L-infinity distance between the pattern profiles extracted from two categorical distributions \\
                        & $Pat\_Cosine$ & Cosine distance between the pattern profiles extracted from two categorical distributions\\
                        & $Pat\_Chisquare$ & Chi-squared  p-value between the pattern profiles extracted from two categorical distributions \\
                        & $Pat\_JS\_div$ &  Jensen–Shannon divergence of the pattern profiles extracted from two categorical distributions \\
                        & $Pat\_kl\_div$ & Kullback–Leibler divergence of the pattern profiles extracted from two categorical distributions\\
    \hline
    \multirow{8}*{Single-distribution}  & $str\_len$ & the average length of strings observed in a categorical column\\
                        & $char\_len$ & the average string length for values observed in a categorical column\\
                        & $digit\_len$ & the average number of digits in values observed from a categorical column\\
                        & $punc\_len$ & the average number of punctuation in values observed from a categorical column\\
                        & $row\_count$ & the number of rows  observed from a categorical column\\
                       & $unique\_ratio$ & the fraction of unique values  observed from a categorical column\\
                        & $complete\_ratio$ & the fraction of complete (non-null) values  observed from a categorical column\\
                        & $dist\_val\_count$ & the number of distinct values  observed from a categorical column\\
    \bottomrule 
    \end{tabular}
    \caption{Statistical metrics used to generate DQ constraints for categorical data. 
    }
    \vspace{-4mm}
    \label{tab:category-complete}
\end{table*}
\end{small}

\section{Synthetic ``training'' data}
\label{apx:datagen}

We carefully reviewed the DQ literature and cataloged a list of 10 common types of DQ issues in pipelines, so that we can systematically synthesize data deviations that are due to DQ issues, which would help us to select the most salient ``features'' or DQ constraints that are sensitive in detecting common DQ deviations. 

We enumerate the list of $10$ different types of DQ issues below, as well as the parameters we use (to control deviations with different magnitudes). By injecting varying amounts of DQ issues into a given column $C$, we generates a total of 60  variations $C'$ for each $C$ (e.g., different fractions of values in $C$ are replaced with nulls for the type of DQ issue ``increased nulls''). Collectively, we denote this set of synthetically generated DQ issues on $C$ as $\mathbf{D}(C)$.

\textbf{DQ Issue Type 1: Schema change. }  We replace $p\%$ (with $p=1, 10, 100$) of values in a target column $C$ for which we want to inject DQ variation, using values randomly sampled from a neighboring column of the same type. This is to simulate a ``schema change'', where some fraction of values in a different column are either partially mis-aligned (e.g., due to a missing delimiter or bad parsing logic), or completely mis-aligned (e.g., due to extra or missing columns upstream introduced over time). Note that $p=100$ corresponds to a complete schema-change, otherwise it is a partial schema change. 

\textbf{DQ Issue Type 2: Change of unit.} To simulate a change in the unit of measurement, which is a common DQ issue (e.g., reported by Google in~\cite{polyzotis2017data, breck2019data} like discussed earlier), we synthetically multiply values in a numeric column $C$ by x10, x100 and x1000. 

\textbf{DQ Issue Type 3: Casing change.} To simulate possible change of code-standards (e.g., lowercase country-code to uppercase, as reported by Amazon~\cite{schelter2018deequ}), we  synthetically change $p\%$ fraction of values  (with $p=1, 10, 100$) in $C$, from lowercase to uppercase, and vice versa.

\textbf{DQ Issue Type 4: Increased nulls. } Since it is a sudden increase of null values such as NULL/empty-string/0 is  common DQ issue, we sample $p\%$ values in a $C$ (with $p=1, 50, 100$), and replace them with empty-strings in the case of categorical attribute, and 0s in the case of numerical attribute.

\textbf{DQ Issue Type 5: Change of data volume.} Since a sudden increase/decrease of row counts can also be indicative of DQ issues~\cite{polyzotis2017data}, we up-sample values in $C$ by a factor of x2, x10, or down-sample $C$ with only 50\%, 10\% of the values.

\textbf{DQ Issue Type 6: Change of data distributions}. To simulate a sudden change of data distributions~\cite{breck2019data}, we sorted all values in $C$ first, then pick the first or last $p\%$ values as a biased sample and replace $C$, with $p=10, 50$.

\textbf{DQ Issue Type 7: Misspelled values by character perturbation.} Typos and misspellings is another type of common DQ issue (e.g., ``Missisipii'' and ``Mississippi''), frequently introduced by humans when manually entering data. To simulate this type of DQ issue, we randomly perturb $p\%$ of characters in $C$ to a different character of the same type (e.g., $[0-9] \rightarrow [0-9]$, and $[a-z] \rightarrow [a-z]$), with $p=1, 10, 100$. 

\textbf{DQ Issue Type 8:  Extraneous values by character insertion.}  Sometimes certain values in a column $C$ may be associated with extraneous characters that are not expected in clean data. To simulate this, for each value in $C$, we insert randomly generated characters with  probability $p\%$, where $p=10, 50$.

\textbf{DQ Issue Type 9: Partially missing values by character deletion. } Sometimes certain values in a column $C$ may get partially truncated, due to issues in upstream logic. We simulate this by deleting characters for values in $C$ with probability  $p\%$, where $p=10, 50$. 

\textbf{DQ Issue Type 10: Extra white-spaces by padding.} We randomly insert leading or tailing whitespace for $p\%$ of values, where $p=10, 50, 100$. 

While we are clearly not the first people to report these aforementioned DQ issues, we are the first to systematically catalog them and synthetically generate such DQ variations, and are the first to use them as ``training data'' that guides a DQ algorithm to select the most salient DQ features specific to the characteristics of a column $C$. We release our generation procedures in~\cite{Full}, which can be used for future research.


\section{Pattern generation}
\label{appendix:pattern}

In addition to use raw values from columns and compare their distributional similarity (e.g., using $L_1$, $L_{inf}$, $Cosine$, etc.),  sometimes values in a column follows a specific pattern, for example, timestamp values like "2022-03-01 (Monday)", currency values like ``\$19.99'', zip-codes like ``98052-1202'', etc.
For such values, comparing distributions for raw values that are drawn from a large underlying domain induced by patterns (e.g., time-stamp), typically yields very small overlap/similarity because of the large space of possible values in the underlying domain. (This is in contrast to small categorical domains with a small number of possible values, where distributional similarity is usually high and more meaningful).

In \avh, we observe that the pattern strings for such pattern-induced domains is an orthogonal representation of values in a column, which gives another way to ``describe'' the column and detect possible DQ deviations. For example, timestamp values like "2022-03-01 (Monday)" can be generalized to a pattern "$\backslash d$$\backslash d$$\backslash d$$\backslash d$-$\backslash d$$\backslash d$-$\backslash d$$\backslash d$ ($\backslash l\backslash l\backslash l\backslash l\backslash l\backslash l$)", currency values like ``\$19.99'' can be generalized to ``\$$\backslash d$$\backslash d$.$\backslash d$$\backslash d$'', etc. Assuming that the format of the data is changed due to upstream DQ issue, e.g., currency values become mixed where some values have no currency-signs, or time-stamps becomes mixed with multiple formats of time-stamps, a distributional similarity of the pattern strings above provides a powerful way to ``describe'' the expected pattern distribution in a column, which makes it possible to catch DQ issues in columns whose underlying domains are pattern-related.  

For the metrics that have a prefix ``Pat\_'' in Table~\ref{tab:category}, we first generate pattern-strings for each value $v\in C$, by converting each character in $v$ to a wildcard character following a standard [0-9] $\rightarrow$ $\backslash d$ (for digits), [a-zA-Z] $\rightarrow$ $\backslash l$ (for letters), and replace all punctuation as ``-''. We then compute the same distributional similarity (e.g., $L_1$, $L_{inf}$, $Cosine$, etc.), just as regular distributional similarity metrics for raw string values. (Note that \avh is robust to a large space of DQ constraints, and can intelligently select the most salient features, such that for columns where pattern-based DQ is not a good DQ description, such pattern-based DQ constraints will not be selected automatically.)


\begin{small}
\begin{algorithm}[t]
\SetKwInOut{Input}{Input}
\SetKwInOut{Output}{Output}
\SetKw{KwReturn}{return}

\Input{A categorical column $C$}
\Output{The column pattern $C'$}

\ForEach{$value \in C$}{
    \ForEach{ \text{same pattern of consecutive} $chars \in value$}{
        \If{$char \in [0-9]$} {
            Replace consecutive chars with a symbol \bm{$\backslash d$}
        }
        \If{$char \in [A-Z] \  or \  [a-z]$} {
            Replace consecutive chars with a symbol \bm{$\backslash l$}
        }
        \If{$char \in [punctuation]$} {
            Replace consecutive chars with a symbol \bm{$-$}
        }
    }
}
$C' \leftarrow C$ \\
\KwReturn $C'$ 
\caption{Pattern Generation}
\label{algo:pattern}
\end{algorithm}
\end{small}

\section{construct constraints: Pseudocode}
\label{apx:pseudo-code}

We show the pseudo code to construct DQ constraints in Algorithm~\ref{algo:constraints}. This procedure directly corresponds to Section~\ref{sec:dq_constraint}

\begin{algorithm}[t]
\SetKwInOut{Input}{Input}
\SetKwInOut{Output}{Output}
\SetKw{KwBy}{by}
\SetKw{kwReturn}{return}
 \Input{Metrics $\mathbf{M}$, history $H = \{C_1, C_2, \ldots\}$ of col $C$}
 \Output{Constructed DQ constraints $\mathbf{Q}$}
 $\mathbf{Q} \leftarrow \emptyset$
 
\ForEach{$M \in \mathbf{M}$}
{
    $M(H) \leftarrow \{M(C_1), M(C_2), \ldots, M(C_K)\}$ 
    
    $M(H) \leftarrow$ process-stationary($M(H)$)  \tcp*[h]{Algorithm~\ref{algo:stationary}} \label{line:subprocedure}

    $\mu \leftarrow $ mean of $M(H)$, $\sigma^2 \leftarrow$ variance of $M(H)$
    
  \ForEach{$\beta \in [\sigma, n \sigma],$ \text{increasing with a step-size} $s$,}
    {
    $Q_i \leftarrow Q(M, C, \mu - \beta, \mu + \beta)$ 
    
    $FPR(Q_i) \leftarrow \text{calc-FPR}(M, \beta)$ \tcp{\small{Equation~\eqref{eqn:chebyshev}-\eqref{eqn:clt}}}
    
    $\mathbf{Q} \leftarrow \mathbf{Q} \cup Q_i$
    }
}

\kwReturn $\mathbf{Q}$

\caption{Construct DQ constraints $\mathbf{Q}$}
\label{algo:constraints}
\end{algorithm}

\begin{algorithm}[t]
\SetKwInOut{Input}{Input}
\SetKwInOut{Output}{Output}
\SetKw{KwReturn}{return}

 \Input{$M(H) = $ $\{M(C_1), M(C_2), \ldots, M(C_K)\}$}
 \Output{Processed $M'(H)$ that is stationary 
 }

$is\_stationary \leftarrow ADF(M(H))$ \tcp*{Perform ADF test}

\If{$is\_stationary$} 
    {
    \KwReturn $M(H)$ 
    }
\Else
    {
    $M'(H) \leftarrow$ time-series-differencing($M(H)$) \tcp{{using first-order and seasonal differencing}}
    
    \KwReturn $M'(H)$ 
    }
\caption{Time-series differencing for stationary}
\label{algo:stationary}
\end{algorithm}

\section{Time-series Differencing}
\label{appendix:differencing}

Algorithm~\ref{algo:stationary} gives an overview of the time-series differencing step, which we will expand and explain in this section. Details of this step can be found in Algorithm \ref{algo:stationary-process}.

\begin{small}
\begin{algorithm}[h]
\SetKwInOut{Input}{Input}
\SetKwInOut{Output}{Output}
\SetKw{KwReturn}{return}

 \Input{$M(H) = $ $\{M(C_1), M(C_2), \ldots, M(C_K)\}$}
 \Output{Processed stationary $M'(H)$ 
 }

$is\_stationary \leftarrow ADF(M(H))$ \tcp*{Perform ADF test}

\If{$is\_stationary$} {
    \KwReturn $M(H)$ 
}
\Else{
    \text{// Perform lag transformation without log transformation}
    \ForEach{$lag \in [1, K-1]$} {
        $M'(H) \leftarrow lag\_transform(M(H),\  lag$) \\
        $is\_stationary \leftarrow ADF(M'(H))$ \\
        \If{$is\_stationary$}{
            \KwReturn $M'(H)$
        }
    }
   
    \text{// Perform lag transformation with log transformation}
    \ForEach{$lag \in [1, K-1]$} {
        $M'(H) \leftarrow lag\_transform\_with\_log(M(H),\  lag$) \\
        $is\_stationary \leftarrow ADF(M'(H))$ \\
        \If{$is\_stationary$}{
            \KwReturn $M'(H)$
        }
    }
    
\KwReturn $None$
}
\caption{Time-series differencing for stationary (Details)}
\label{algo:stationary-process}
\end{algorithm}
\end{small}

Recall that time-series differencing aims to make time series stationary with static underlying parameters. We start by performing the ADF test to determine the stationarity of $M(H)$, and return $M(H)$ if it is already stationary. If it is not, we then perform lag-based transforms~\cite{granger1980introduction}. Given a sequence of $M(H) = \{M(C_1), M(C_2), \ldots, M(C_K)\}$, a lag-based transform with $lag=l$ is defined as 
\begin{small}
$$M(H)^{lag=l} = \{M(C_{l+1}) - M(C_{1}), M(C_{l+2}) - M(C_{2}), \ldots, M(C_{l+K} - M(C_{l}))\}$$ 
\end{small}
Which performs a difference step for two events that are $l$ time-steps away. Observe that such a differencing step handles cyclic data with periodic patterns (e.g., weekly user traffic data can be differenced away with $lag=7$), as shown in Example~\ref{ex:differencing} earlier.  

For each $lag \in [1, K-1]$, if the resulting $M(H)^{lag}$ is already stationary (passes the ADF test), we return the corresponding $M(H)^{lag}$ for the next stage for \avh to auto-program DQ (and remember the $lag$ parameter to pre-process data arriving in the future).

If none of the $lag$ parameter leads to a stationary time-series, we additionally perform a log transform on $M(H)$, which can better handle time-series with values that are orders of magnitude different. We repeat the same process as lag-only transforms like above, until we find a stationary time-series or we return None (in which case, the sequence $M(H)$ associated with this metric $M$ will be ignored by downstream \avh due to its non-stationary nature. Also note that it is possible to perform additional second-order or third-order differencing, which we omit here).

\section{Proof of Proposition 2}
\label{appendix:proof_cantelli}

\begin{proofsketch}
We prove this proposition using Cantelli's inequality~\cite{bulmer1979principles}. Cantelli's inequality states that for a random variable $X$, there is a class of one-sided inequality in the form of $P(X - \mu \geq k \sigma) \leq \frac{1}{1 + k^2}$, $\forall k \in \mathbb{R}^+$. 

For metrics $M \in \{EMD, JS\_div, KL\_div,$ $KS\_dist,$ $Cohen\_d,$ $L_1,$ $L_{inf},$ $Cosine,$ $Chisquared\}$, which are ``distance-like'' metrics, DQ constraints can be one-sided only, to guard against deviations with distances larger than usual, e.g., newly-arrived data whose distance from previous batches of data are substantially larger than is typically expected. (On the other hand, if the distance of new data and previous batches of data are smaller than usual, this shows more homogeneity and is typically not a source of concern). 

For this reason, we can apply Cantelli's inequality for metrics $M \in \{EMD, JS\_div, KL\_div,$ $KS\_dist, Cohen\_d, L_1,$ $L_{inf},$ $Cosine,$ $Chisquared\}$, with one-sided DQ. For such a metric $M$, let $M(C)$ be our random variable. Let $k = \frac{\beta}{\sigma}$. Replacing $k$ with $\frac{\beta}{\sigma}$ above, we get $P(M(C) - \mu \geq \beta) \leq \frac{\sigma^2}{\sigma^2 + \beta^2}$. 
Note that $P(M(C) - \mu \leq \beta)$ is exactly our one-sided DQ for metrics with distance-like properties.  We thus get $P(Q \text{ violated on } C)$ $\leq \frac{\sigma^2}{\sigma^2 + \beta^2}$, which is equivalent to saying that the expected FPR of $Q$ is no greater than $\frac{\sigma^2}{\sigma^2 + \beta^2}$.
 \end{proofsketch}

\section{Proof of Proposition 3}
\label{appendix:proof_clt}

\begin{proofsketch}
We prove this proposition using Central Limit Theorem (CLT)~\cite{bulmer1979principles}. 
Recall CLT states that when  independent random variables are summed up and normalized, it tends toward normal distribution. Metrics $M \in \{count, mean,$ $str\_len,$ $char\_len, digit\_len,$ $punc\_len, complete\_ratio\}$, can all be viewed as the sum of independent random variables (for example, $str\_len,$ $char\_len,$ $digit\_len$ etc. are straightforward sum of these functions applied on individual cells; $count$ are the 0/1 sum for a random variable indicating tuple presence/not-presence, etc.). Such sums are then averaged over all cells in the same column $C$, which would tend to normal distributions per CLT. We can thus apply the tail bound of normal distributions, making it possible to apply tail bounds of normal distributions. 

For $M \in \{count, mean,$ $str\_len,$ $char\_len, digit\_len,$ $punc\_len,$ $complete\_ratio\}$, let $M(C)$ be our random variable. From tail bounds of normal distributions, we know  $P(-k \sigma \leq M(C) - \mu \leq k \sigma) = erf(\frac{k}{\sqrt{2}})$~\cite{normal-error-function-66-95-99-rule}, where $erf(x)$ is the Gauss error function. Let $k = \frac{\beta}{\sigma}$. Replacing $k$ with $\frac{\beta}{\sigma}$ above, we get $P(-\beta \leq M(C) - \mu \leq \beta) = erf(\frac{\beta}{ \sqrt{2} \sigma})$ $=  \frac{2}{\sqrt{\pi}}\int_{0}^{\frac{\beta}{\sqrt{2}\sigma}} e^{-t^2} dt$. Note that $P(-\beta \leq M(C) - \mu \leq \beta)$ is exactly $P(Q \text{~satisified on C})$, thus we get $\mathrm{E}[FPR(Q)] = 1 - \frac{2}{\sqrt{\pi}}\int_{0}^{\frac{\beta}{\sqrt{2}\sigma}} e^{-t^2} dt$.
\end{proofsketch}

\section{Hardness of the \avh problem}
\label{appendix:hardness}

\begin{proposition}
The \avh problem in Equation~\eqref{eqn:recall-2}-Equation~\eqref{eqn:conjunctive-2} is NP-hard. Furthermore, it cannot be approximated within a factor of $(1-\frac{1}{e})$ under standard assumptions.
\end{proposition}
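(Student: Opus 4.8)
The plan is to strip away the statistical machinery and observe that the optimization problem in Equations~\eqref{eqn:recall-2}--\eqref{eqn:conjunctive-2} is literally an instance of the classical \emph{Budgeted Maximum Coverage} problem: the ground set is the synthetic error set $\mathbf{D}(C)$, the candidate sets are $\{R(Q_i) : Q_i \in \mathbf{Q}\}$, the cost of selecting $Q_i$ is $\text{FPR}(Q_i)$, the budget is $\delta$, and the objective $|\bigcup_{Q_i \in \mathbf{S}} R(Q_i)|$ is exactly coverage. Hardness and inapproximability then follow by transferring the corresponding results for (Budgeted) Maximum Coverage, and in fact already for its unweighted special case \emph{Maximum $k$-Coverage} (choose $k$ of the sets to maximize the size of their union).

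For NP-hardness I would give a direct reduction from Maximum $k$-Coverage. Given an instance with universe $U$, a family $\{S_1,\dots,S_m\}$, and an integer $k$, I construct an \avh instance by taking the ground set to be $U$, taking $\mathbf{Q} = \{Q_1,\dots,Q_m\}$ with $R(Q_j) = S_j$ and $\text{FPR}(Q_j) = 1$ for every $j$, and setting $\delta = k$. Then $\mathbf{S}\subseteq\mathbf{Q}$ satisfies Equation~\eqref{eqn:fpr-2} iff $|\mathbf{S}| \le k$; since the coverage objective is monotone in $\mathbf{S}$, restricting to subsets of size exactly $k$ loses nothing, so the optimum of the \avh instance equals the optimum of the Maximum $k$-Coverage instance. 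As Maximum $k$-Coverage is NP-hard (its decision version contains Set Cover), the \avh problem is NP-hard.

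For the inapproximability bound I would invoke Feige's classical threshold result: for every constant $\epsilon>0$, no polynomial-time algorithm approximates Maximum $k$-Coverage within a factor of $(1-\tfrac{1}{e}+\epsilon)$ unless $\mathrm{P}=\mathrm{NP}$. The reduction above is \emph{exact} in the sense needed for approximation preservation --- a feasible \avh solution of coverage $v$ is, on the set level, a family of at most $k$ sets covering $v$ elements, and conversely any $k$-subfamily is feasible with the same coverage --- so any polynomial-time $(1-\tfrac{1}{e}+\epsilon)$-approximation for \avh composes with the reduction to give one for Maximum $k$-Coverage, a contradiction. This establishes the claimed $(1-\tfrac{1}{e})$ hardness-of-approximation for \avh.

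The one point deserving care, and the closest thing to an obstacle, is the scope of the statement: the reduction sets the ``costs'' $\text{FPR}(Q_j)$ to arbitrary values rather than deriving them from the Chebyshev/Cantelli/CLT bounds of some genuine column history, so what is proved is the hardness of the abstract problem \eqref{eqn:recall-2}--\eqref{eqn:conjunctive-2}, which is exactly what is asserted. (Obtaining hardness restricted to instances ``realizable'' by actual metrics over some $C$ and $H$ would additionally require a gadget encoding an arbitrary set system as statistics of a synthetic column, which is not needed here.) This matches the positive side: Proposition~\ref{prop:approx} gives a constant-factor $(\tfrac{1}{2}-\tfrac{1}{2e})$ approximation, so \avh is APX-hard yet constant-factor approximable.
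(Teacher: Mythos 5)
Your proposal is correct and follows essentially the same route as the paper: a reduction from Maximum ($k$-)Coverage in which each set becomes a constraint $Q_i$ with $R(Q_i)=S_i$ and unit cost $\text{FPR}(Q_i)=1$, budget $\delta=k$, giving NP-hardness and the $(1-\frac{1}{e})$ inapproximability via Feige's threshold. Your additional remarks on approximation preservation and on the reduction proving hardness only of the abstract (not metric-realizable) problem are more careful than the paper's sketch but do not change the argument.
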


\begin{proofsketch}
We show the hardness using a reduction from the Maximum Coverage problem~\cite{nemhauser1978analysis}. Recall that in Maximum Coverage, we are given a set of sets $S$, and the objective is to find a subset $S' \subseteq S$ such that the union of the elements covered by $S'$, $|\bigcup_{S_i \in S'}{S_i}|$, is maximized, subject to a cardinality constraint $|S'| \leq K$.

We show a polynomial time reduction from Maximum Coverage to \avh as follows. For any instance of Maximum Coverage with $S = \{S_i\}$, we construct the an \avh problem by converting each $S_i$ into a DQ constraint $Q_i$, whose $FPR(Q_i)$ is unit cost 1, and recall $R(Q_i)$ is exactly the set of elements in $S_i$. If we could solve the corresponding \avh problem in polynomial-time, we would have solved the Maximum Coverage, thus contracting the hardness of Maximum Coverage.

Also note that through the construction above, the objective value of Maximum Coverage is identical to that of \avh. Thus we can use the inapproximation results from Maximum Coverage ~\cite{nemhauser1978analysis}, to show that \avh cannot be approximated within a factor of $(1 - \frac{1}{e})$.
\end{proofsketch}

\section{Proof of Proposition 4}
\label{appendix:approx}

\begin{proofsketch}
We show that Algorithm~\ref{algo:avh} is a $(\frac{1}{2} - \frac{1}{2e})$ approximation algorithm for the \avh problem, which follows from the Budgeted Maximum Coverage problem~\cite{khuller1999budgeted}. Recall that in Budgeted Maximum Coverage problem, we are given a set of sets $S = \{S_i\}$, where each set $S_i$ has a cost $c(S_i)$, and each element in sets has a weight $w(e_j)$, the objective is to find a subset $S' \subseteq S$ such that the weight of all elements covered by $S'$ is maximized, subject to a budget constraint $\sum_{S_i \in S'}c(S_i) \leq B$. We show that for any instance of our \avh problem, it can be converted to Budgeted Maximum Coverage  as follows. We convert each $Q_i$ into a set $S_i$, and let the cost $c(S_i)$ be $FPR(Q_i)$. Furthermore, we convert the set of recall items into elements in Budgeted Maximum Coverage, and set the weight of each element to unit weight. Finally, we let the elements covered by $S_i$ in  Budgeted Maximum Coverage  to be exactly the $R(Q_i)$ in \avh. The approximation ratio in Proposition~\ref{prop:approx} follows directly from the Theorem 3 of~\cite{khuller1999budgeted} now.

We note that there are an alternative algorithm with better approximation ratio $(1 - \frac{1}{e})$~\cite{khuller1999budgeted}, which however is of complexity $|\mathbf{Q}|^3$, where $|\mathbf{Q}|$ is the number of DQ constraints constructed from Algorithm~\ref{algo:constraints}. Because $|\mathbf{Q}|$ is at least in the hundreds, making the alternative very expensive in practice and not used in our system.

We also show that the solution $S$ from our Algorithm~\ref{algo:avh} is a feasible solution of \avh, whose expected FPR is lower than $\delta$. In order to see this, recall that we construct DQ $Q_i \in \mathbf{Q}$ and estimate each $Q_i$'s worst case $FPR(Q_i)$ following Proposition~\ref{prop:chebyshev},~\ref{prop:cantelli},~\ref{prop:clt}. Algorithm~\ref{algo:avh} ensures that $\sum_{Q_i \in S}{FPR(Q_i)} \leq \delta$. For the conjunctive program $P(S)$ induced by $S$, the FPR of  $P(S)$ follows the inequality $FPR(P(S)) \leq \sum_{Q_i \in S}{FPR(Q_i)}$, because the false-positives from  $P(S)$, is produced by a union of the false-positives from each $Q_i \in S$. Combining this with $\sum_{Q_i \in S}{FPR(Q_i)} \leq \delta$, we get $FPR(P(S)) \leq \delta$.
\end{proofsketch}

\section{Sensitivity Analysis}
\label{apx:exp-sensitivity}

\begin{figure*}[t]
\vspace{-12mm}
\begin{minipage}[c]{0.3\linewidth}
    \centering
    \includegraphics[height=4cm]{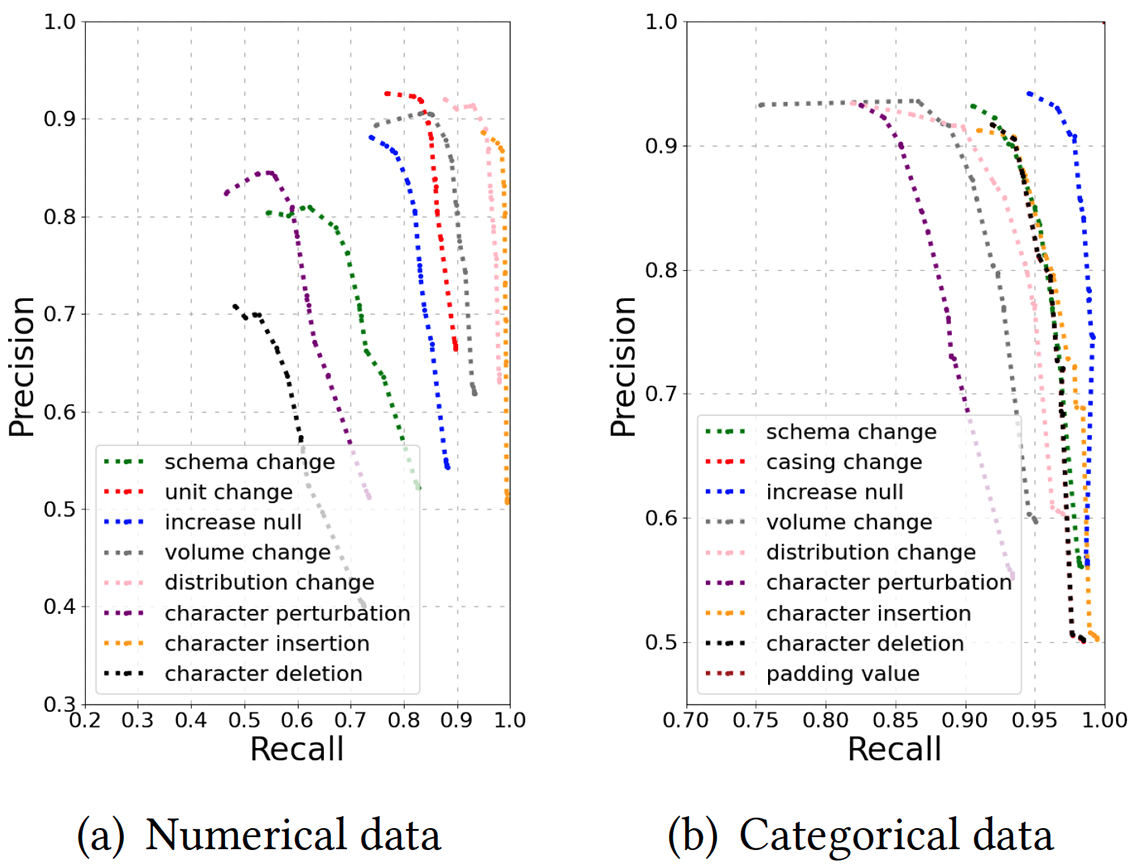} 
    \vspace{-2mm}
    \caption{Results by DQ types}
    \label{fig:syn-by-types}
\end{minipage}
\hspace{3mm}
\begin{minipage}[c]{0.3\linewidth}
    \centering
    \includegraphics[height=4cm]{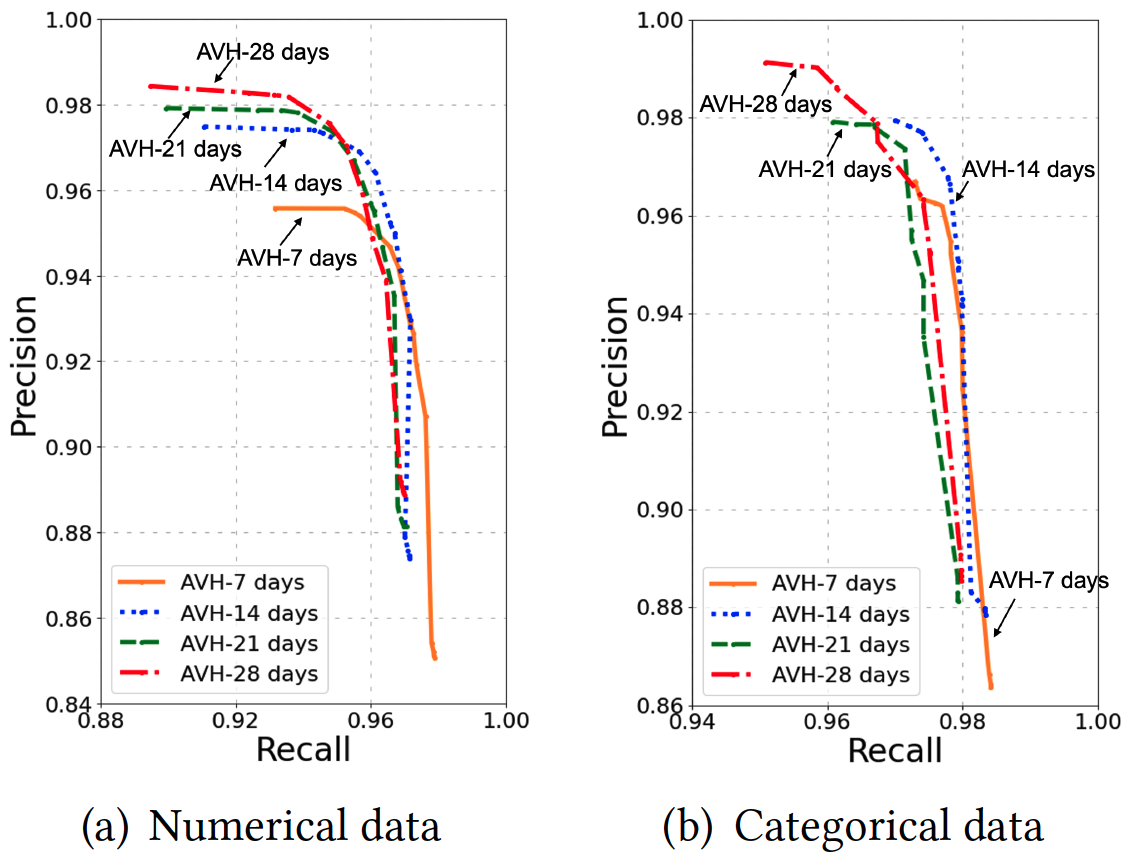} 
    \vspace{-2mm}
    \caption{Sensitivity to history length}
    \label{fig:vary day}
\end{minipage}
\hspace{3mm}
\begin{minipage}[c]{0.3\linewidth}
    \centering
    \includegraphics[height=4cm]{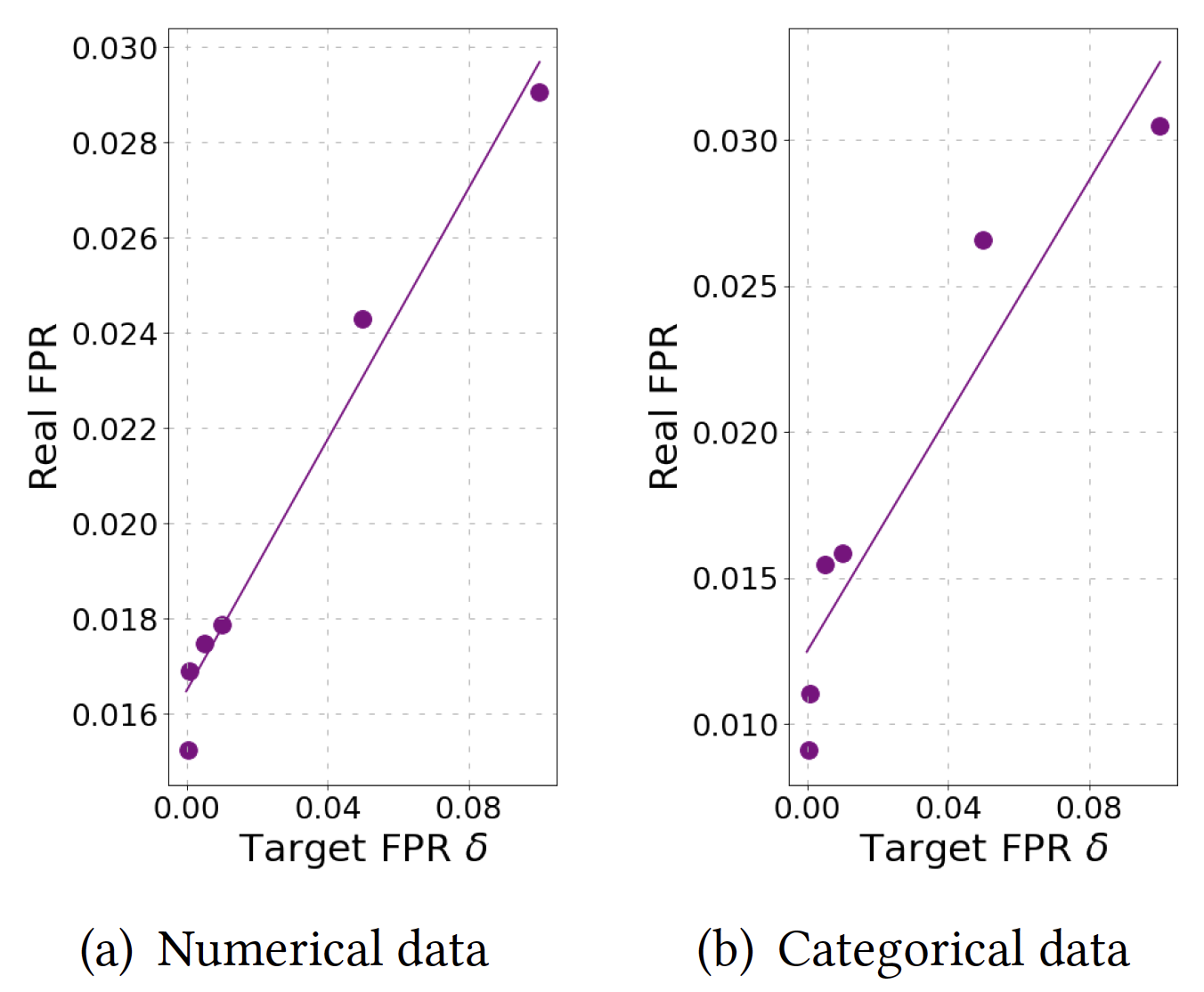} 
    \vspace{-2mm}
    \caption{Sensitivity to target FPR $\delta$}
    \label{fig:fpr}
\end{minipage}
\end{figure*}

We perform extensive experiments to understand the sensitivity of our method.
\iftoggle{fullversion}
{
}
{
    We discuss a subset of our sensitivity and ablation studies in this section (additional results can be found in~\cite{full}).
}

\underline{Sensitivity to history length.}
Since \avh leverages a history of past pipeline executions, 
where the number of past executions likely has an impact on accuracy. Figure \ref{fig:vary day} shows the accuracy results for numerical and categorical data respectively, when \{7, 14, 21, 28\} days of historical data are available.
Overall, having 28-day history leads to the best precision, though with 14 and 7-day histories \avh also produces competitive results. We highlight that unlike traditional ML methods that typically require more than tens of data points (Figure~\ref{fig:real} suggests that even 30-day history is not sufficient for ML methods), \avh exploits the unique statistical properties of the underlying metrics (e.g., Chebyshev and CLT), and can work well even with limited data, which is a unique characteristic of \avh.




\underline{Sensitivity to target precision $\delta$.}
Figure \ref{fig:fpr} shows the relationship between the target FPR $\delta$ parameter used in \avh (Equation~\eqref{eqn:fpr}), and the real FPR observed on \avh results (we note that 1-FPR corresponds to the precision metric). On both numerical and categorical data, the real FPR increases slightly when a larger target FPR $\delta$ is used, showing the effectiveness of this knob $\delta$ in \avh. Also note that the real FPR is consistently lower than the target-FPR, likely due to the conservative nature of the statistical guarantees we leverage (Chebyshev and Cantelli's inequalities we use in Proposition~\ref{prop:chebyshev} and~\ref{prop:cantelli} give worst-case guarantees).

\section{Ablation Studies} 
\label{apx:exp-ablation}

We perform additional ablation studies, to understand the importance of different components used in \avh. 

\underline{Effect of using single/two-distribution metrics.}
Recall that in \avh, we exploit both single-distribution and two-distribution metrics (in Table~\ref{tab:numeric} and Table~\ref{tab:category}) to construct DQ programs. A key difference of the two types of metrics, is that computing two-distribution metrics (e.g., $L_{inf}$ and $L_1$) would require both the current column $C_K$ and its previous snapshot $C_{K-1}$ (e.g., in $L_{1}(C_K, C_{K-1})$). This requires raw data from the previous run $C_{K-1}$ to be kept around, which can be costly in production big-data systems. In contrast, single-distribution metrics (e.g., $row\_count$ and $unique\_ratio$) can be computed on $C_K$ and $C_{K-1}$ separately, and we only need to keep the corresponding metrics from $C_{K-1}$ without needing to keep the raw $C_{K-1}$, which makes single-distribution metrics a lot more efficient and inexpensive to use in \avh.

\begin{figure*}[t!]
\begin{minipage}[c]{0.3\linewidth}
    \centering
    \includegraphics[height=4cm]{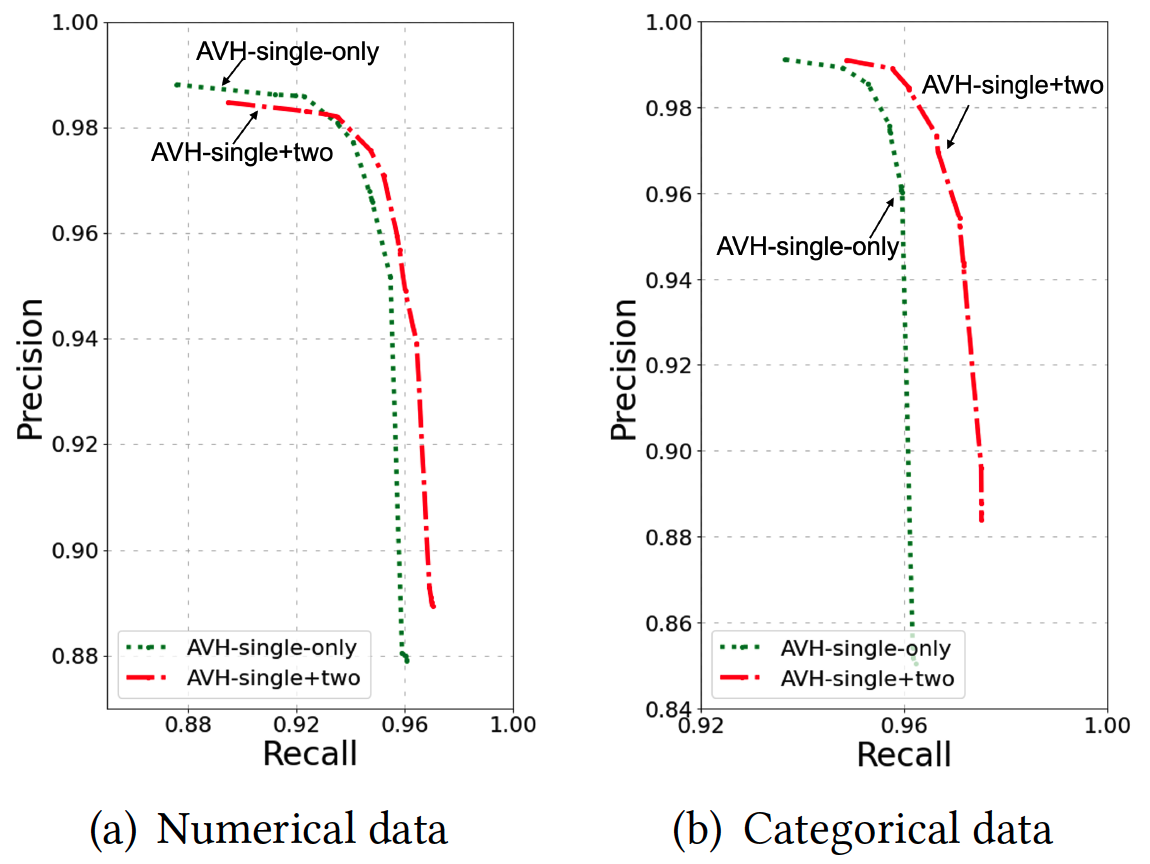} 
    \vspace{-2mm}
    \caption{Effect of two-dist. metrics}
    \vspace{-2mm}
    \label{fig:single-two-comp}
\end{minipage}
\hspace{3mm}
\begin{minipage}[c]{0.3\linewidth}
    \centering
    \includegraphics[height=4cm]{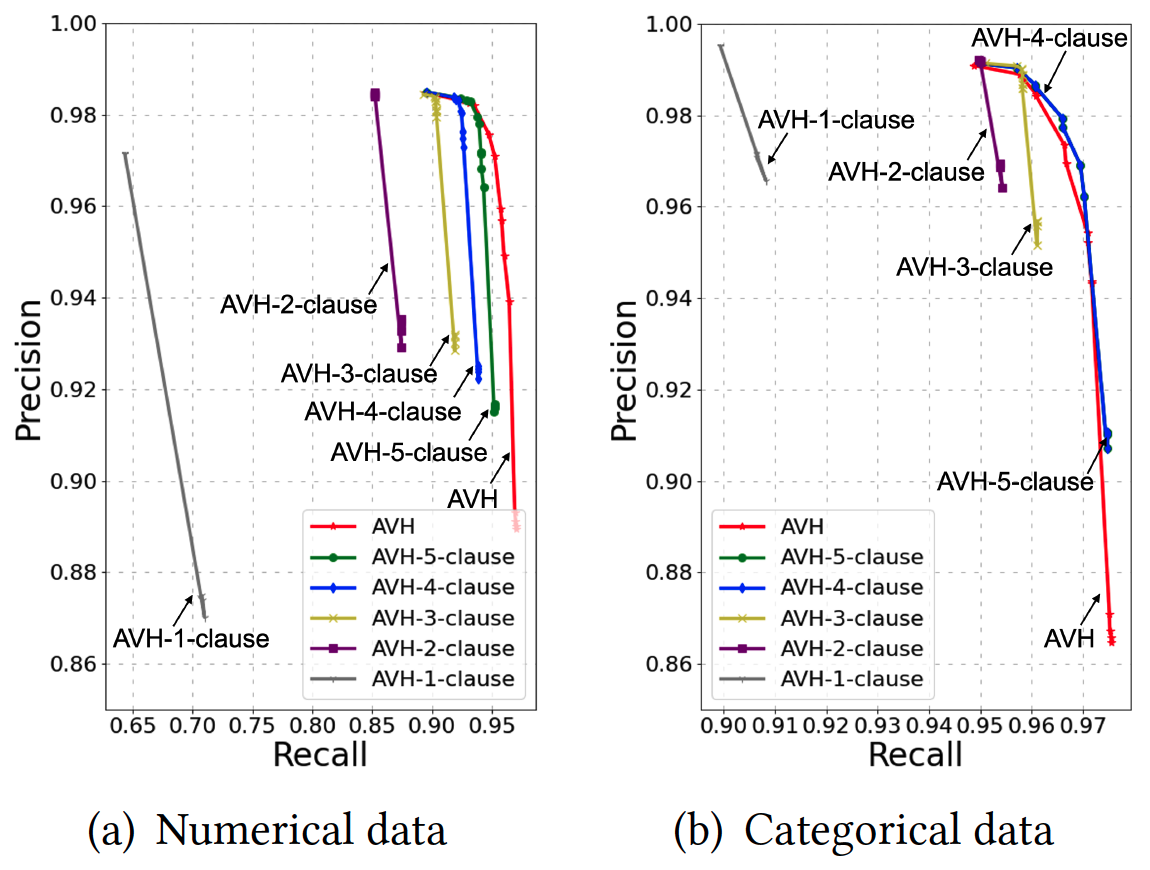} 
    \vspace{-2mm}
    \caption{Effect of clause count limit}
    \vspace{-2mm}
    \label{fig:clause limit}
\end{minipage}
\hspace{3mm}
\begin{minipage}[c]{0.3\linewidth}
    \centering
    \includegraphics[height=4cm]{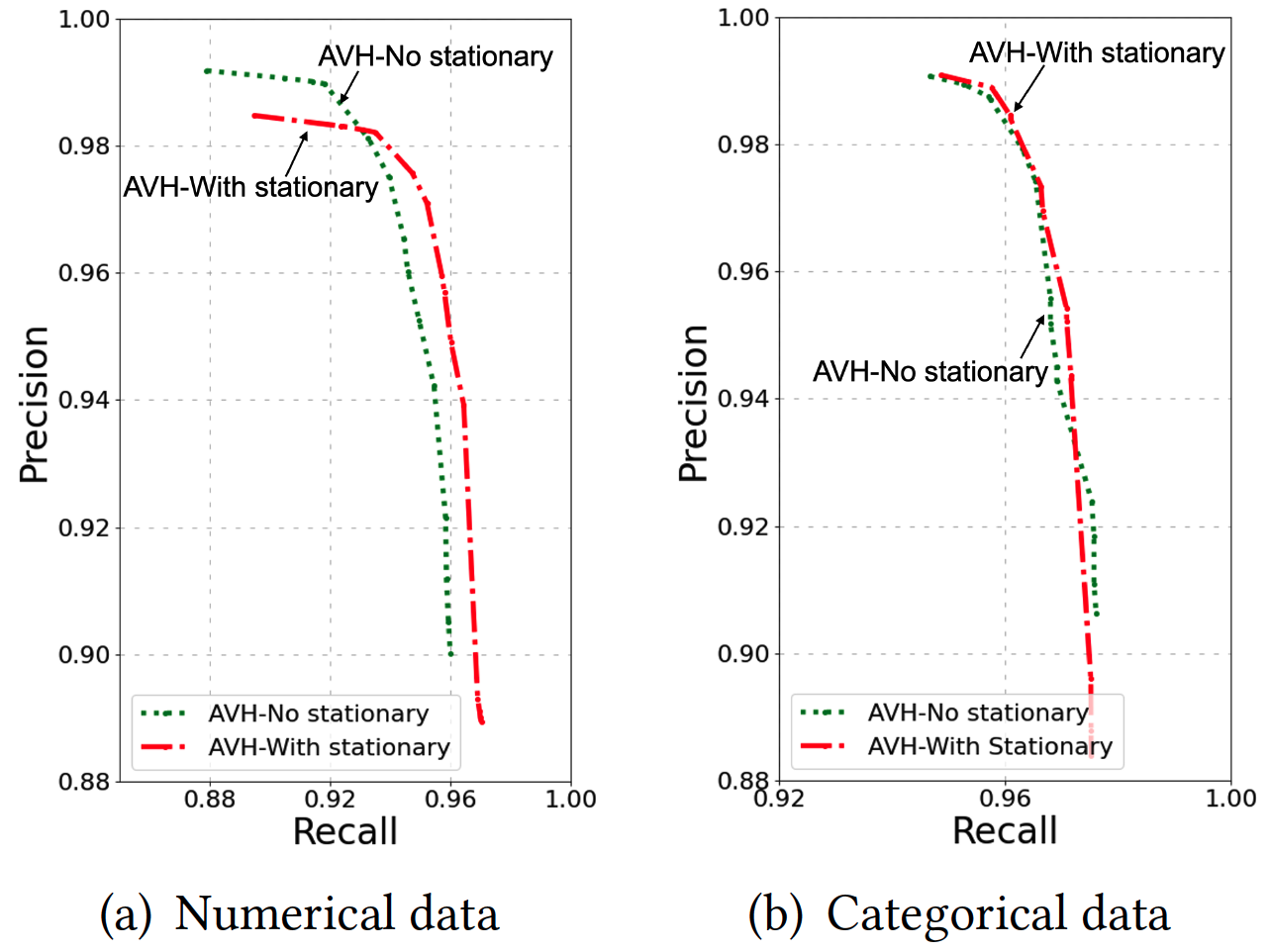}
    \vspace{-2mm}
    \caption{Using stationary processing}
    \vspace{-2mm}
    \label{fig:stationary}
\end{minipage}
\end{figure*}

In Figure \ref{fig:single-two-comp}, we compare the full \avh (with both single- and two-distribution metrics), with \avh using only single-distribution metrics.  Encouragingly, the latter variant produces comparable quality with the full \avh, likely because the large space of single-distribution metrics is already rich and expressive enough. This suggests that we can deploy \avh inexpensively without using two-distribution metrics, while still reaping most of the benefits.

\underline{Effect of limiting the number of DQ clauses.} We also study the number of DQ clauses that \avh generates, because intuitively, the more clauses it generates, the more expressive the DQ programs become, at the cost of human explainability/interpretability (there is a setting in \avh that engineers can review and approve auto-suggested DQ programs). We report that on numerical data in the \texttt{Real} benchmark, the median/mean of the number of clauses \avh generates is 3 and 2.62, respectively; on categorical data the median/mean is 2 and 2.15, respectively.
We believe this shows that the programs \avh generates are not only effective but also simple/understandable. In Figure~\ref{fig:clause limit} we impose an artificial limit on the number of clauses that \avh can generate (in case better readability is required). We observe a drop in performance when only 1 or 2 clauses are allowed, but the performance hit becomes less significant if we allow 3 clauses.

\underline{Effect of stationary processing.} Since we use stationarity test and stationary processing for statistics that are time-series (Section~\ref{sec:dq_constraint}), in Figure~\ref{fig:stationary} we study its effect on overall quality. We can see that for numerical data, stationary processing produces  a noticeable improvement, which however is less significant on categorical data.

    \section{Manual Review of Pipeline Data} 
    \label{sec:error}
    
    Based on our conversations with data engineers and data owners, the data tables collected from production data pipelines used in our \texttt{Real} benchmark (describe in Section~\ref{sec:benchmark}) are production-quality and likely free of DQ issues, because these files are of high business impact with many downstream dependencies, such that if they had any DQ issues they would have already been flagged and fixed by data engineers.
    
    In order to be sure, we randomly sampled 50 categorical and 50 numerical data columns, and manually inspected these sample data in the context of their original data tables, across 60 snapshots, to confirm the quality of the benchmark data. We did not find any DQ issues based on our manual inspection. We also perform a hypothesis test based on the manual analysis, with $H_0$ stating that over 3\% of data has DQ issues. Our inspection above  rejects the null hypothesis (p-level=0.05), indicating that it is highly unlikely that the benchmark data has DQ issues, which is consistent with the assessment from data owners, and confirms the quality of the data used in the \texttt{Real} benchmark.
    
    During our conversations with data engineers, we were pointed to three known DQ incidents, which we collect and use as test cases to study \avh's coverage. \avh was able to detect all such known DQ cases based on historical data.
    Figure~\ref{fig:error-fp} shows such an example that is intuitive to see. Here each file is an output table (in csv format) produced by a daily recurring pipeline. As can be seen in the figure, for the file produced on ``2019-01-19'', the file size (and thus row-count) is much larger than the days before and after ``2019-01-19'' (21KB vs. 4KB). While small in scale, we believe this study on user-provided data further confirms the effectiveness of \avh. 
    
    \begin{figure}[b]  
        \vspace{-10mm}
        \centering\includegraphics[height=4cm]{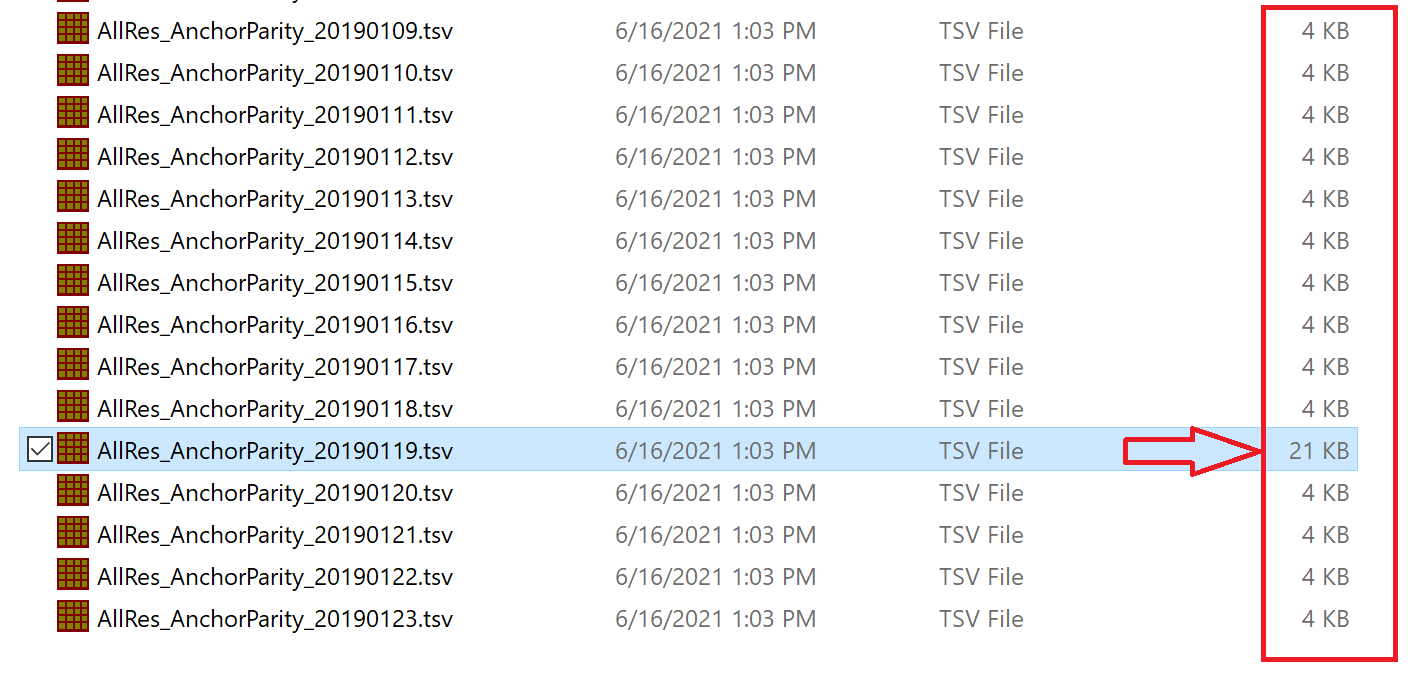}
        \vspace{-8mm}
        \caption{Example of a real DQ issue flagged by \avh.} 
        \label{fig:error-fp}
    \end{figure}

}
{

}

\end{document}